\documentclass{lmcs} 
\usepackage[utf8]{inputenc}
\pdfoutput=1

\usepackage{lastpage}
\lmcsdoi{19}{4}{10}
\lmcsheading{}{\pageref{LastPage}}{}{}%
{Jun.~25,~2021}{Nov.~09,~2023}{}

\keywords{Entropy, Giry Monad, Bayesian Learning, standard Borel spaces, scoring rules}

\usepackage{amsmath,amssymb,latexsym,stmaryrd}
\usepackage{datetime}
\usepackage{mathtools}
\usepackage{tikz-cd}
\usepackage{bbm}
\usepackage[parfill]{parskip}
\usepackage{hyperref}
\usepackage{graphicx}

\usetikzlibrary{cd}
\usetikzlibrary{decorations.pathmorphing}

\makeatletter
\newcommand{\singlespacing}{\let\CS=\@currsize\renewcommand{\baselinestretch}{1}\small\CS}
\newcommand{\doublespacing}{\let\CS=\@currsize\renewcommand{\baselinestretch}{1.75}\small\CS}
\newcommand{\normalspacing}{\let\CS=\@currsize\renewcommand{\baselinestretch}{\BLS}\small\CS}
\makeatother

%

 





\newcommand{\x}{\times}



\newcommand{\mes}{\mathbf{Mes}}


\newcommand{\dee}[1]{\mathrm{d}#1}


\newcommand{\average}{\mathbf{\mu}}
\newcommand{\monid}{\mathbf{\eta}}
\newcommand{\catname}[1]{{\mathbf{#1}}}

\newcommand{\Sbs}{\mbox{$\catname{SbStat}$}}
\newcommand{\Sb}{\mbox{$\catname{StBor}$}}
\newcommand{\Fins}{\mbox{$\catname{FinStat}$}}
\newcommand{\monoid}{$\mathtt{[0,\infty]}$}
\newcommand{\mon}{\; \tilde{\circ} \;}
\newcommand{\monf}{\; \tilde{\circ}_{fin} \;}
\newcommand{\relentf}{$RE_{fin}$}

\theoremstyle{plain}

\begin{document}

\title[A characterization of relative entropy on standard Borel spaces]{A categorical characterization of relative\texorpdfstring{\\}{ }entropy on standard Borel spaces}
\titlecomment{{\lsuper*}An extended abstract of this work appeared in the
  $33^{rd}$ Mathematical Foundations of Programming Semantics Conference.}

\author[N.~Gagn\'e]{Nicolas Gagn\'e\lmcsorcid{0009-0004-0207-0106}}[a]	
\author[P.~Panangaden]{Prakash Panangaden\lmcsorcid{0000-0001-8763-6172}}[b]

\address{D\'epartement d'informatique et de recherche op\'erationnelle \\ Universit\'e de Montr\'eal \\
    Montr\'eal, Qu\'ebec, Canada\vspace*{-\parskip}}		
\email{nicolas.gagne.5@umontreal.ca}  

\address{School of Computer Science \\ McGill University \\ Montr\'eal, Qu\'ebec, Canada\vspace*{-\parskip}}	
\email{prakash@cs.mcgill.ca}  

\thanks{This research has been supported by NSERC and by Google.}	

\begin{abstract}
\noindent We give a categorical treatment, in the spirit of Baez and Fritz, of
relative entropy for probability distributions defined on standard Borel spaces.
We define a category called \Sbs{} suitable for reasoning about
statistical inference on standard Borel spaces.  We define relative entropy as a
functor into Lawvere's category $[0,\infty]$ and we show convexity, lower
semicontinuity and uniqueness.
\end{abstract}

\maketitle

\section{Introduction}

The inspiration for the present work comes from two recent developments.
The first is the beginning of a categorical understanding of Bayesian
inversion and learning~\cite{Danos15,Dahlqvist16,Clerc17,dahlqvist2018borel}. The second is a
categorical reconstruction of relative
entropy~\cite{Baez11,Baez14,Leinster}.  The present paper provides a
categorical treatment of entropy in the spirit of Baez and Fritz in the
setting of standard Borel spaces, thus setting the stage to explore the role of
entropy in learning.

Recently there have been some exciting developments that bring some
categorical insights to probability theory and specifically to learning
theory.  These are reported in some recent papers by Clerc, Dahlqvist,
Danos and Garnier~\cite{Danos15,Dahlqvist16,Clerc17}.  The first of these
papers showed how to view the Dirichlet distribution as a natural
transformation thus opening the way to an understanding of higher-order
probabilities, while the second gave a powerful framework for constructing
several natural transformations.  In~\cite{Danos15} the hope was expressed
that one could use these ideas to understand Bayesian inversion, a core
concept in machine learning.  In ~\cite{Clerc17} this was realized in a
remarkably novel way.  These papers carry out their investigations in the
setting of standard Borel spaces and are based on the Giry
monad~\cite{Giry81,Lawvere64a}.

In ~\cite{Baez11,Baez14} a beautiful treatment of relative entropy is
given in categorical terms.  The basic idea is to understand entropy in
terms of the results of experiments and observations.  How much does one
learn about a probabilistic situation by doing experiments and observing
the results?  A category is set up where the morphisms capture the
interplay between the original space and the space of observations.  In
order to interpret the relative entropy as a functor they use Lawvere's
category which consists of a single object and a morphism for every
extended positive real number~\cite{Lawvere73}.  

Our contribution is to develop the theory of Baez et al.\@ in the setting
of standard Borel spaces; their
work is carried out with finite sets.  While the work of~\cite{Baez14}
gives a firm conceptual direction, it gives little guidance in the actual
development of the mathematical theory.  We had to redevelop the
mathematical framework and find the right analogues for the concepts
appropriate to the finite case.

\section{Background}
In this section we review some of the background.  We assume that the
reader is familiar with concepts from topology and measure theory as well
as basic category theory.  We have found books by Ash~\cite{Ash72},
Billingsley~\cite{Billingsley95} and Dudley~\cite{Dudley89} to be useful.

We will use letters like $X,Y,Z$ for measurable spaces and
capital Greek letters like $\Sigma,\Lambda,\Omega$ for $\sigma$-algebras.
We will use $p,q,\ldots$ for probability
measures.  Given $(X,\Sigma)$ and $(Y,\Lambda)$ and a measurable function
$f:X\to Y$ and a probability measure $p$ on $(X,\Sigma)$ we obtain a measure on
$(Y,\Lambda)$ by $p \circ f^{-1}$; this is called the \emph{pushforward}
measure or the \emph{image} measure.  

\subsection{The Giry monad}
We denote the category of measurable spaces and measurable functions by
$\mes{}$.  We recall the Giry~\cite{Giry81} functor
$\Gamma : \mes{} \rightarrow \mes{}$ which maps each measurable space $X$
to the space $\Gamma(X)$ of probability measures over $X$.  Let
$A\in\Sigma$, we define $\mathsf{ev}_A:\Gamma(X)\to[0,1]$ by
$\mathsf{ev}_A(p) = p(A)$.  We endow $\Gamma(X)$ with the smallest
$\sigma$-algebra making all the $\mathsf{ev}$'s measurable.  A morphism
$f:X\to Y$ in $\mes{}$ is mapped to $\Gamma(f): \Gamma (X)\to \Gamma(Y)$ by
$\Gamma(f)(p) = p \circ f^{-1}$.  With the following natural
transformations, this endofunctor is a monad: the Giry monad.  The natural
transformation $\monid:I\to \Gamma$ is given by $\monid_X(x)=\delta_x$, the
Dirac measure concentrated at $x$.  The monad multiplication
$\average : \Gamma^2 \rightarrow \Gamma$ is given by
 \[\forall A\in \mathcal{B}(X),\;\average_X(p)(A) := \int_{\Gamma(X)} \mathsf{ev}_A \; \dee{p}\]
 where $p$ is a probability measure in $\Gamma(\Gamma(X))$ and
 $\mathsf{ev}_A:\Gamma(X)\to [0,1]$ is the measurable function on $\Gamma(X)$
 defined by $\mathsf{ev}_A(p) = p(A)$.

 Even if $\mes{}$ is an interesting category in and of itself, the need for
 regular conditional probabilities forces us to restrict ourselves to a
 subcategory of standard Borel spaces.

\subsection{Standard Borel spaces and disintegration}

The Radon-Nikodym theorem is the main tool used to show the existence of
conditional probability distributions, also called Markov kernels, see the
discussion below.  It is a very general theorem, but it does not give as
strong regularity features as one might want.  A stronger theorem is
needed; this is the so-called \emph{disintegration theorem}.  It requires
stronger hypotheses on the space on which the kernels are being defined.  A
category of spaces that satisfy these stronger hypotheses is the category
of standard Borel spaces.  In order to define standard Borel spaces, we
must first define Polish spaces.

\begin{defi}
A \emph{Polish space} is a separable, completely metrizable topological
space.
\end{defi}

\begin{defi}
A \emph{standard Borel space} is a measurable space obtained by forgetting
the topology of a Polish space but retaining its Borel algebra.  The
category of standard Borel spaces has measurable functions as morphisms; we
denote it by $\Sb$.
\end{defi}

We can now state a version of the \emph{disintegration theorem}.  The
following is also known as \emph{Rohlin's disintegration
  theorem}.
\begin{thmC}[\cite{Rokhlin49}]\label{Dis}
  Let $(X,p)$ and $(Y,q)$ be two standard Borel spaces equipped with probability
  measures, where $q$ is the pushforward measure $q := p \circ f^{-1}$ for
  a Borel measurable function $f: X \rightarrow Y$.  Then, there exists a
  $q$-almost everywhere uniquely determined family of probability measures
  $\{p_y\}_{y \in Y}$ on $X$ such that
\begin{enumerate}
\item the function $y \mapsto p_y(A)$ is a Borel-measurable function for
  each Borel-measurable set $A \subset X$; 
\item $p_y$ is a probability measure on  $f^{-1}(y)$ for $q$-almost all $y \in Y$;
\item for every Borel-measurable function $h: X \rightarrow [0,\infty]$, \[
    \int_X h \; \dee p  = \int_Y \int_{f^{-1}(y)} h \; \dee p_y \dee q.\] 
\end{enumerate}
\end{thmC}
  The objects obtained are often called \emph{regular conditional
    probability distributions}.
One can find a crisp categorical formulation of disintegration in~\cite[Theorem 1]{Clerc17}.

\subsection{The Kleisli category of \texorpdfstring{\( \Gamma \)}{Gamma} on \texorpdfstring{\Sb{}}{StBor}}

It is well known that the Giry monad on $\mes{}$ restricted to \Sb{} admits the same monad structure.~\cite{Giry81}

The Kleisli category of $\Gamma$ has as objects standard Borel spaces and as
morphisms maps from $X$ to $\Gamma(Y)$: $h:X\to (\mathcal{B}_Y\to [0,1])$ which are
measurable.  Here $\mathcal{B}_Y$ stands for the Borel sets of $Y$ and $\Gamma(Y)$ has
the $\sigma$-algebra described above.  Now we can curry this to write it as
$h:X\x\mathcal{B}_Y\to[0,1]$ or $h(x,U)$ where $x$ is a point in $X$ and $U$ is a
Borel set in $Y$.  Written this way it is called a Markov kernel and
one can view it as a transition probability function or conditional
probability distribution given $x$.  Composition of morphisms $f:X\to Y$
and $g:Y\to Z$ in the Kleisli category is given by the formula
\[ (g\circ f)(x,V \in \mathcal{B}_{Z}) = \int_Y g(y,V) \; \dee{f(x,\cdot)}.     \]

For an arrow $s : Y \rightarrow \Gamma(X)$ in $\Sb$, we write $s_y$ for $s(y)$ or, in
kernel form $s(y,\cdot)$.  For arrows $t : Z \rightarrow \Gamma(Y)$ and
$s : Y \rightarrow \Gamma(X)$ in $\Sb$, we denote their Kleisli composition by
$s \mon t := \average_X \circ \Gamma(s) \circ t$.  For standard Borel spaces equipped with a
probability measure $p$, we sometimes omit the measure in the notation,
\emph{i.e.}\, we sometimes write $X$ instead of $(X,p)$.  We say a probability measure $p$ is
\emph{absolutely continuous} with respect to another measure $q$ on the
same measurable space $X$, denoted by $p \ll q$, if for all measurable
sets $B$, $q(B)=0$ implies that $p(B)=0$.  

We note that absolute continuity is preserved by Kleisli composition; the
proof is straightforward.
\begin{prop}\label{giry_abs}
Given a standard Borel space $Y$ with probability measures $q$ and $q'$ such that
$q \ll q'$.  Then, for arbitrary standard Borel space $X$ and morphism $s$ from $Y$
to $\Gamma(X)$, we have $s \mon q \ll s \mon q'$.  
\end{prop}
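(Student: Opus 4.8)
The plan is to reduce the statement to a pointwise fact about the kernel $s$ by unfolding the Kleisli composition into an integral. Regarding the probability measure $q$ as the Kleisli arrow from the one-point space into $\Gamma(Y)$ that it determines, I would first compute, for an arbitrary Borel set $B \cont X$, that
\[ (s \mon q)(B) = \int_Y s_y(B) \; \dee{q}, \]
and likewise $(s \mon q')(B) = \int_Y s_y(B) \; \dee{q'}$. This follows directly from the definitions: unwinding $s \mon q = \average_X \circ \Gamma(s) \circ q$, the pushforward along $s$ turns $q$ into $q \circ s^{-1}$ on $\Gamma(X)$, and applying $\average_X$ integrates $\mathsf{ev}_B$ against $q \circ s^{-1}$; the change-of-variables formula then rewrites this as the integral of $\mathsf{ev}_B(s_y) = s_y(B)$ against $q$. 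The measurability of $y \mapsto s_y(B)$ that makes this integral meaningful is exactly the measurability of $\mathsf{ev}_B \circ s$, which holds because $s$ is a morphism into $\Gamma(X)$ and the $\sigma$-algebra on $\Gamma(X)$ is generated by the $\mathsf{ev}$'s.

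With the integral formula in hand the argument is a short chase through a null set. Suppose $B \cont X$ satisfies $(s \mon q')(B) = 0$, that is, $\int_Y s_y(B) \; \dee{q'} = 0$. Since the integrand $s_y(B)$ is nonnegative, the standard fact that a nonnegative measurable function with vanishing integral is almost everywhere zero forces $s_y(B) = 0$ for $q'$-almost every $y$; equivalently, the set $N \deq \{ y \in Y : s_y(B) > 0 \}$, which is measurable as the preimage of $(0,\infty]$ under $y \mapsto s_y(B)$, satisfies $q'(N) = 0$. The hypothesis $q \ll q'$ now transfers this to $q$, giving $q(N) = 0$. Splitting the integral for $(s \mon q)(B)$ over $N$ and its complement, the integrand vanishes identically off $N$, while over $N$ it is bounded by $1$ (each $s_y$ being a probability measure) and $N$ is $q$-null; hence both contributions are zero and $(s \mon q)(B) = 0$. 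As $B$ was arbitrary, this is precisely $s \mon q \ll s \mon q'$.

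I do not expect any genuine obstacle, and indeed the preceding remark already flags the proof as straightforward. The only points deserving a moment's care are the bookkeeping in the change-of-variables step that yields the integral formula for $s \mon q$, and checking that the ``integral zero implies almost-everywhere zero'' step is licensed — which requires only the nonnegativity and measurability of $y \mapsto s_y(B)$, both immediate from the setup. The boundedness $s_y(B) \le 1$ is what lets the $q$-null set $N$ contribute nothing to the final integral, closing the argument.
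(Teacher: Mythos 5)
Your proof is correct; the paper itself omits the argument, remarking only that it is straightforward, and your unfolding of $(s \mon q)(B) = \int_Y s_y(B)\, \dee{q}$ followed by the null-set chase is exactly the standard argument the authors have in mind. No gaps.
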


\section{The categorical setting}
In this section, following Baez and Fritz~\cite{Baez14} (see also
~\cite{Baez11}) we describe the category \Fins{} which
they use for their characterization of entropy on finite spaces.  We
then introduce the category $\Sbs{}$ which will be the arena for the
generalization to standard Borel spaces.

Before doing so, we define the notion of coherence which will play an
important role in what follows.

\begin{defi}
   Given standard Borel spaces $X$ and $Y$ with probability measure $p$ and $q$, respectively, a pair $(f,s)$, with $f : (X,p) \rightarrow (Y,q)$ and $s : Y \rightarrow \Gamma(X)$ measurable, is
  said to be \emph{coherent}\footnote{Note that a coherent pair $(f,s)$ by definition satisfies condition (1) and condition (2) of Theorem (\ref{Dis}) but is not required to satisfy condition (3). } when $f$ is measure preserving, i.e., $q = p \circ f^{-1}$, and $s_y$
is a probability measure on $f^{-1}(y)$ $q$-almost everywhere.  \footnote{Note that $(f,s)$ being coherent is equivalent to $\eta_Y = \Gamma(f) \circ s$.}  If in addition, $p$ is absolutely
  continuous with respect to $s \mon q$, then we say that $(f,s)$ is
  \emph{absolutely coherent}.
\end{defi}

\begin{defi}
The category \Fins{} has
\begin{itemize}
\item \textbf{Objects} : Pairs $(X, p)$ where $X$ is a finite set and $p$ a
probability measure on $X$.  
\item \textbf{Morphisms} : $\operatorname{Hom}(X,Y)$ are all coherent pairs $(f,s)$, $f : X \rightarrow Y$ and $s : Y \rightarrow \Gamma(X)$.
\end{itemize}
We compose arrows $(f,s) : (X,p) \rightarrow (Y,q)$ and
$(g,t) : (Y,q) \rightarrow (Z,m)$ as follows:
$(g,t) \circ (f,s) := (g \circ f, s \monf t)$ where $\monf$ is defined
as \[ (s \monf t)_z(x) = \sum_{y \in Y} t_z(y)s_y(x).\]
\end{defi}

We now leave the finite world for a more general one: the category \Sbs{}.
\pagebreak
\begin{defi}
The category \Sbs{} has
\begin{itemize}
\item \textbf{Objects} : Pairs $(X, p)$ where $X$ is a standard Borel space and $p$ a
probability measure on the Borel subsets of $X$.  
\item \textbf{Morphisms} : $\operatorname{Hom}(X,Y)$ are all coherent pairs
  $(f,s)$, $f : X \rightarrow Y$ and $s : Y \rightarrow \Gamma(X)$. 
\end{itemize}
We compose arrows $(f,s) : (X,p)
  \rightarrow (Y,q)$ and $(g,t) : (Y,q) \rightarrow (Z,m)$ as follows: $(g,t) \circ
  (f,s) := (g \circ f, s \mon t)$.  
\end{defi}
Note that the identity arrow on object $(X,p)$ is $(id_X, \eta_X)$ where $id_X$ is the identity function on $X$. Following the graphical representation from~\cite{Baez14} we represent
composition as follows:
\[
\begin{tikzcd}
(X,p) \arrow[r, bend right, "f", swap] & 
(Y,q)\arrow[l, bend right, squiggly, "s", swap] \arrow[r, bend right, "g", swap] &
(Z,m) \arrow[l, bend right, squiggly, "t", swap]
\end{tikzcd}
\xRightarrow[\operatorname{Composition}]{} 
\begin{tikzcd}
(X,p) \arrow[r, bend right, "g \, \circ f", swap] & 
(Z,m) \arrow[l, bend right, squiggly, "s \mon t", swap]
\end{tikzcd}.
\]

One can think of $f$ as a measurement process from $X$ to $Y$ and of $s$ as a hypothesis about $X$ given an observation in $Y$. We say that a hypothesis $s$ is \emph{optimal}\footnote{For a coherent pair $(f,s)$, asking $s$ to be optimal is equivalent to asking that $(f,s)$ satisfies condition (3) in Theorem (\ref{Dis}) as will be shown in Lemma (\ref{conditional_equal}).} if $p = s \mon q$. We denote by $\mathbf{FP}$ the subcategory of \Sbs{} consisting of the same objects, but with only those morphisms where the hypothesis is optimal.  See ~\cite{Baez11,Baez14} and ~\cite{Leinster} for a discussion of these ideas in the finite case.

\begin{prop}\label{comp_stable}
Given coherent pairs the composition is coherent.  If, in addition, they are
  absolutely coherent, the composition is absolutely coherent.
\end{prop}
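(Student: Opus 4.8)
The plan is to verify the two clauses of coherence for the composite $(g\circ f,\, s\mon t)$ in turn, and then to obtain absolute coherence by reducing it to Proposition~\ref{giry_abs}. First I would dispatch the measure-preservation clause, which is purely formal: since $(g\circ f)^{-1}(B)=f^{-1}(g^{-1}(B))$ for every Borel $B\subseteq Z$, the hypotheses $q=p\circ f^{-1}$ and $m=q\circ g^{-1}$ coming from the two morphisms give $p\circ(g\circ f)^{-1}=(p\circ f^{-1})\circ g^{-1}=q\circ g^{-1}=m$.

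The substance of the first claim is the fibre condition, and the cleanest route is the categorical reformulation recorded in the footnote to the definition of coherence: coherence of $(f,s)$ and of $(g,t)$ is exactly $\eta_Y=\Gamma(f)\circ s$ and $\eta_Z=\Gamma(g)\circ t$, so it suffices to establish $\eta_Z=\Gamma(g\circ f)\circ(s\mon t)$. Unfolding $s\mon t=\average_X\circ\Gamma(s)\circ t$ and using, in order, functoriality of $\Gamma$, the naturality square $\Gamma(f)\circ\average_X=\average_Y\circ\Gamma(\Gamma(f))$, coherence of $(f,s)$, and the monad unit law $\average_Y\circ\Gamma(\eta_Y)=\mathrm{id}_{\Gamma(Y)}$, I would compute
\begin{align*}
\Gamma(g\circ f)\circ(s\mon t)
&=\Gamma(g)\circ\Gamma(f)\circ\average_X\circ\Gamma(s)\circ t\\
&=\Gamma(g)\circ\average_Y\circ\Gamma\!\big(\Gamma(f)\circ s\big)\circ t\\
&=\Gamma(g)\circ\average_Y\circ\Gamma(\eta_Y)\circ t
=\Gamma(g)\circ t=\eta_Z,
\end{align*}
the final equality being coherence of $(g,t)$. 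This shows the composite is coherent.

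For absolute coherence I would exploit associativity of Kleisli composition, which (reading the measures $m$ and $q$ as Kleisli arrows out of the one-point space, or equivalently by Tonelli applied to the nonnegative kernels) gives $(s\mon t)\mon m=s\mon(t\mon m)$. The two additional hypotheses are $p\ll s\mon q$ and $q\ll t\mon m$. From $q\ll t\mon m$, Proposition~\ref{giry_abs} yields $s\mon q\ll s\mon(t\mon m)$; chaining this with $p\ll s\mon q$ through transitivity of $\ll$ gives $p\ll s\mon(t\mon m)=(s\mon t)\mon m$, which is precisely absolute coherence of the composite.

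The step I expect to be the main obstacle is the fibre condition, precisely because coherence is stated almost everywhere. The identity $\eta_Y=\Gamma(f)\circ s$ holds only $q$-almost everywhere, yet in the chain above it is being manipulated inside the integral defining $s\mon t$, i.e.\ integrated against $t_z$ in the $y$-variable. The manipulation is legitimate provided the $q$-null set on which $s_y$ fails to be carried by $f^{-1}(y)$ is also $t_z$-null for $m$-almost every $z$ — equivalently, provided $t\mon m$ does not charge it, a point that is genuinely delicate when $g$ collapses positive-measure sets and creates atoms downstream. The convenient device is to work throughout with the everywhere-defined version of the coherence identity afforded by the footnote, so that the exceptional set disappears and the computation becomes formal monad calculus; pinning down this a.e.\ bookkeeping is where the standard Borel structure and the care reside, whereas the absolute-coherence half is then an immediate consequence of Proposition~\ref{giry_abs} and transitivity.
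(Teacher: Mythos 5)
Your proposal is correct and follows essentially the same route as the paper: the equational computation $\Gamma(g\circ f)\circ(s\mon t)=\Gamma(g)\circ\average_Y\circ\Gamma(\Gamma(f)\circ s)\circ t=\Gamma(g)\circ t=\monid_Z$ is exactly the paper's commuting diagram (naturality of $\average$, $\Gamma$ applied to the coherence of $(f,s)$, the unit law, and the coherence of $(g,t)$), and the absolute-coherence half via Proposition~\ref{giry_abs} and transitivity of $\ll$ is identical. Your closing remark about the almost-everywhere bookkeeping is a fair observation about a subtlety the paper also elides by working with the footnote's equational form of coherence, but it does not change the argument.
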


\begin{proof} 
We first show that the composition is coherent, i.e., $\monid_Z = \left(\Gamma(g) \circ \Gamma(f)\right) \circ (s \mon t)$.  It is sufficient to show that the following diagram commutes:
\[ \begin{tikzcd}
& \Gamma(Y) \arrow[d, "\Gamma(\monid_{Y})"] \arrow[dl, swap, "\Gamma(s)"]   & Z \arrow[l, swap, "t"] \arrow[dd, "\monid_Z"] \\ 
\Gamma^2(X) \arrow[d,"\average_X"] \arrow[r, swap, "\Gamma^2(f)"] & \Gamma^2(Y) \arrow[d, "\mu_{Y}"]  \\
\Gamma(X) \arrow[r, "\Gamma(f)"] & \Gamma(Y) \arrow[r, "\Gamma(g)"] & \Gamma(Z)
\end{tikzcd}\]

Using the hypothesis that $\monid_Z = \Gamma (g) \circ
t$ and the fact that $\operatorname{Id} = \mu \circ \Gamma(\monid)$, we get that the right-hand square commutes.  The triangle commutes since it is the application of $\Gamma$ to our hypothesis $\monid_Y = \Gamma(f) \circ s$ and the left-hand square commutes because $\mu$ is a natural transformation.  Therefore, the whole diagram commutes and we have thus shown the composition of coherent morphisms is also coherent.

Next, in addition, assume the pairs $(f,s)$ and $(g,t)$ are absolutely coherent.  We show
$p \ll (s \mon t \mon m)$.  By hypotheses, $p \ll s \mon q$ and $q \ll t
\mon m$.  Using Proposition \ref{giry_abs} on $q \ll t \mon m$, we get $s \mon q \ll s \mon t
\mon m$.  By transitivity of $\ll$, we conclude $p \ll (s \mon t \mon m)$.  
\end{proof}

We end this section by defining one more category; this one is due to
Lawvere~\cite{Lawvere73}.  It is just the set $[0,\infty]$ but endowed with categorical
structure.  This allows numerical values associated with morphisms to
be regarded as functors.
\begin{defi}
The category $[0, \infty]$ has
\begin{itemize}
\item \textbf{Objects} : One single object: $\bullet$.
\item \textbf{Morphisms} : For each element $r \in [0,\infty]$, one arrow
  $r : \bullet \rightarrow \bullet$.  
\end{itemize}
Arrow composition is defined as addition in $[0,\infty]$. Consequently, $0$ is the identity arrow.
\end{defi}
This is a remarkable category with monoidal closed structure and many other
interesting properties.

\section{Relative entropy functor}

We recapitulate the definition of the relative entropy functor on \Fins{}
from Baez and Fritz~\cite{Baez14} and then extend it to \Sbs{}.  
\begin{defi}
The relative entropy functor \relentf{} is defined from \Fins{} to \monoid{} as follows:
\begin{itemize}
\item \textbf{On Objects} : It maps every object $(X,p)$ to $\bullet$.
\item \textbf{On Morphisms} : It maps a morphism $(f,s) : (X,p) \rightarrow (Y, q)$ to $S_{fin}(p, s \monf q)$, where \[S_{fin}(p, s \monf q) := \sum_{x\in X} p(x) \ln \left( \frac{p(x)}{(s \monf q)(x)}\right).\]
\end{itemize}
\end{defi}

The convention from now on will be that $\infty \cdot c = c \cdot \infty  =
\infty$ for $0 < c \leq \infty$ and $\infty \cdot 0 = 0 \cdot \infty = 0$.  We extend
\relentf{} from \Fins{} to \Sbs{}.  

\begin{defi}
The relative entropy functor $RE$ is defined from \Sbs{} to \monoid{} as follows:
\begin{itemize}
\item \textbf{On Objects} : It maps every object $(X,p)$ to $\bullet$.
\item \textbf{On Morphisms} : Given a coherent morphism $(f,s) : (X,p) \rightarrow (Y, q)$, if $(f,s)$ is absolutely coherent, then
  $RE((f,s)) = S(p, s \mon q)$ , where
  \[ S(p, s \mon q) := \int_X \log\left(\frac{\dee p}{\dee (s \mon q)}\right) \; \dee p,
  \]  otherwise it is defined as $RE((f,s)) = \infty$.
\end{itemize}
\end{defi}
This quantity is also known as the \emph{Kullback-Leibler divergence}.  

We could have defined our category to have only absolutely coherent
morphisms but it would make the comparison with the finite case more
awkward as the finite case does not assume the morphisms to be absolutely coherent.  The present definition leads to slightly awkward proofs where we
have to consider absolutely coherent pairs and ordinary coherent pairs
separately.
  
Clearly, $RE$ restricts to \relentf{} on \Fins{}.  If $(f,s)$ is
absolutely coherent, then $p$ is absolutely continuous with respect to
$(s \mon q)$ and the Radon-Nikodym derivative is defined.  The relative
entropy is always non-negative~\cite{Kullback51}; this is an easy
consequence of Jensen's inequality.  This shows that $RE$ is defined
everywhere in \Sbs{}.

We will use the following notation occasionally: 
\[ RE \bigg( \!  \! \! 
\begin{tikzcd}
(X,p) \arrow[r, bend right, "f", swap] & 
(Y,q) \arrow[l, bend right, squiggly, "s", swap]
\end{tikzcd} \!\!\! \bigg) := RE((f,s)).
\]

It's easy to see that $RE$ sends the identity arrows of \Sbs{} to $0$---the identity arrow of the unique object $\bullet$ of $[0, \infty]$. Hence, in order to show that $RE$ is indeed a functor, it suffices to show that
\begin{small}
\[
RE\bigg( \!\!\!
\begin{tikzcd}
(X,p) \arrow[r, bend right, "f", swap] & 
(Y,q) \arrow[l, bend right, squiggly, "s", swap] \arrow[r, bend right, "g", swap] &
(Z,m) \arrow[l, bend right, squiggly, "t", swap]
\end{tikzcd} \!\!\! \bigg) =  RE\bigg( \!\!\!
\begin{tikzcd}
(X,p) \arrow[r, bend right, "f", swap] & 
(Y,q) \arrow[l, bend right, squiggly, "s", swap]
\end{tikzcd} \!\!\! \bigg) +  RE\bigg( \!\!\!
\begin{tikzcd}
(Y,q) \arrow[r, bend right, "g", swap] &
(Z,m) \arrow[l, bend right, squiggly, "t", swap] 
\end{tikzcd} \!\!\! \bigg).
\]
\end{small}In order to do so, we will need the following two lemmas.

\begin{lem}\label{conditional_equal}
Given an arrow $(f,s) : (X, p) \rightarrow (Y, q)$ in \Sbs.  Let $\{(s \mon
q)_y\}_{y \in Y}$ be a disintegration
of $(s \mon q)$ along $f$, then  
\[(s \mon q)_y =s_y \text{ $q$-almost everywhere.} \]
\end{lem}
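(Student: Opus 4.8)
The plan is to recognise the given family $\{s_y\}_{y\in Y}$ as itself being a disintegration of $(s\mon q)$ along $f$, and then to invoke the uniqueness clause of Theorem~\ref{Dis}. Before the disintegration theorem applies at all, I must check that $q$ is the pushforward of $(s\mon q)$ under $f$. Unfolding the Kleisli composition gives, for every Borel $B\subseteq Y$, that $(s\mon q)(f^{-1}(B)) = \int_Y s_y(f^{-1}(B))\,\dee q$; coherence says $s_y$ is concentrated on the fibre $f^{-1}(y)$ for $q$-almost all $y$, so the integrand equals $\mathbbm{1}_B(y)$ almost everywhere and the integral collapses to $q(B)$. Thus $(s\mon q)\circ f^{-1}=q$, and Theorem~\ref{Dis} produces the $q$-a.e.\ unique disintegration $\{(s\mon q)_y\}_{y\in Y}$.

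It then suffices to verify that $\{s_y\}$ satisfies the three defining properties of such a disintegration. Property~(1), measurability of $y\mapsto s_y(A)$ for each Borel $A\subseteq X$, is immediate: being a morphism $s:Y\to\Gamma(X)$ in \Sbs{} means exactly that each $y\mapsto\mathsf{ev}_A(s_y)=s_y(A)$ is measurable, by the definition of the $\sigma$-algebra on $\Gamma(X)$. Property~(2), that $s_y$ is a probability measure supported on $f^{-1}(y)$ for $q$-almost all $y$, is precisely the coherence hypothesis on $(f,s)$.

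The substantive step is property~(3), the identity $\int_X h\,\dee{(s\mon q)} = \int_Y\int_{f^{-1}(y)} h\,\dee{s_y}\,\dee q$ for every Borel $h:X\to[0,\infty]$. I would prove it by the standard machine: for $h=\mathbbm{1}_A$ the left-hand side is $(s\mon q)(A)$, which by the definition of $\average_X$ and $\Gamma(s)$ equals $\int_Y s_y(A)\,\dee q$; linearity extends this to simple $h$ and monotone convergence to all nonnegative measurable $h$, yielding $\int_X h\,\dee{(s\mon q)}=\int_Y\int_X h\,\dee{s_y}\,\dee q$. Coherence supplies a single $q$-null set $N$ outside of which $s_y(f^{-1}(y))=1$, and for $y\notin N$ the measure $s_y$ places no mass off the fibre, so $\int_X h\,\dee{s_y}=\int_{f^{-1}(y)} h\,\dee{s_y}$ for all $h$ at once; this gives property~(3).

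Having confirmed (1)--(3), the family $\{s_y\}$ is a disintegration of $(s\mon q)$ along $f$, so the uniqueness assertion of Theorem~\ref{Dis} forces $(s\mon q)_y=s_y$ for $q$-almost all $y$, as claimed. The proof is routine once the uniqueness angle is spotted; the only point demanding care is the integral identity of property~(3) and, in particular, checking that the exceptional null set coming from coherence can be taken uniformly in $h$ rather than depending on the integrand---which it can, since concentration of $s_y$ on the fibre is a statement about $s_y$ alone.
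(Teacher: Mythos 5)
Your proposal is correct and follows essentially the same route as the paper: verify that $\{s_y\}_{y\in Y}$ satisfies the three defining properties of a disintegration of $(s \mon q)$ along $f$ (with the indicator-function/simple-function/monotone-convergence ladder for property~(3)) and then invoke the $q$-a.e.\ uniqueness clause of Theorem~\ref{Dis}. You are in fact slightly more thorough than the paper, which dismisses properties (1) and (2) as obvious and does not explicitly check that $q$ is the pushforward of $s \mon q$ under $f$.
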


We just have to show that $\{s_y\}_{y \in Y}$ satisfies the three
properties implied by the disintegration theorem.  We prove the third
one; the first two being obvious.  

\noindent \ref{Dis} (iii) \emph{ : For every Borel-measurable function $h :
  X \rightarrow [0,\infty]$, \[\int_X h \; \dee (s \mon q) = \int_{y \in Y}
    \int_{f^{-1}(y)} h \; \dee s_y \; \dee q.\]} 

\begin{proof}
Let's assume as a special case that $h$ is the indicator function for a
measurable set $E \subset X$.  Then, we have \[ \int_X h \; \dee (s \mon q) =
  \int_E \dee (s \mon q)  = (s \mon q)(E) = \int_{y \in Y} s_y(E) \; \dee q = \int_{y
    \in Y}\int_{f^{-1}(y)}h \; \dee s_y \; \dee q.\] 
We have shown that it is true for any indicator function.  By linearity, it
is true for any simple function and then, by the monotone convergence
theorem, it is true for all Borel-measurable functions $h : X \rightarrow [0,\infty]$.
 \end{proof}

\begin{lem} \label{lem_pre_comp}The relative entropy is preserved under
  pre-composition by optimal hypotheses, i.e., for any $(g,t) : (Y,q) \rightarrow (Z,m)$ and $(f,s) : (X, s \mon q) \rightarrow (Y,q)$,
we have \[ RE\bigg( \!\!\!
\begin{tikzcd}
(Y,q) \arrow[r, bend right, "g", swap] &
(Z,m) \arrow[l, bend right, squiggly, "t", swap]
\end{tikzcd}
\!\!\! \bigg) = RE \bigg( \!\!\!
\begin{tikzcd}
(X,s \mon q) \arrow[r, bend right, "f", swap] & 
(Y,q) \arrow[l, bend right, squiggly, "s", swap] \arrow[r, bend right, "g", swap] &
(Z,m) \arrow[l, bend right, squiggly, "t", swap]
\end{tikzcd} \!\!\! \bigg).  \]
\end{lem}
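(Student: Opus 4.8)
The plan is to reduce the claimed identity to a statement purely about the two measures $q$ and $q' := t \mon m$ on $Y$ and their images under $s$. Unfolding the definitions and using associativity of Kleisli composition, $RE\big((g,t)\big) = S(q,\, t\mon m)$ while $RE\big((g\circ f,\, s\mon t)\big) = S\big(s\mon q,\, (s\mon t)\mon m\big) = S\big(s\mon q,\, s\mon(t\mon m)\big)$, where the domain $(X, s\mon q)$ is precisely the object on which the hypothesis $s$ is optimal. Writing $q' := t\mon m$, the lemma therefore amounts to the single equation
\[ S(q, q') = S(s\mon q,\; s\mon q'). \]
I would prove this in two cases, according to whether $(g,t)$ is absolutely coherent.

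First I would treat the absolutely coherent case $q \ll q'$. The heart of the argument is the Radon--Nikodym identity
\[ \frac{\dee{(s\mon q)}}{\dee{(s\mon q')}}(x) = \frac{\dee q}{\dee q'}\big(f(x)\big) \qquad (s\mon q')\text{-a.e.} \]
To establish it I would verify the defining property of the derivative: for every Borel $E \subset X$, $\int_E \big(\tfrac{\dee q}{\dee q'}\circ f\big)\,\dee{(s\mon q')} = (s\mon q)(E)$. By Lemma~\ref{conditional_equal} (applied with $q'$ in place of $q$) the family $\{s_y\}_{y\in Y}$ is a disintegration of $s\mon q'$ along $f$, so Theorem~\ref{Dis}(iii) rewrites the left-hand integral as $\int_Y \tfrac{\dee q}{\dee q'}(y)\, s_y(E)\,\dee q'$, using that $\tfrac{\dee q}{\dee q'}\circ f$ is constant equal to $\tfrac{\dee q}{\dee q'}(y)$ on the fibre $f^{-1}(y)$; since $\tfrac{\dee q}{\dee q'}\,\dee q' = \dee q$ this equals $\int_Y s_y(E)\,\dee q = (s\mon q)(E)$, as required. (Here $s\mon q \ll s\mon q'$ by Proposition~\ref{giry_abs}, so the derivative genuinely exists.) Substituting the identity into $S(s\mon q, s\mon q')$, the integrand $\log\tfrac{\dee q}{\dee q'}(f(x))$ depends on $x$ only through $f(x)$, so I would change variables along the pushforward. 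Coherence gives $(s\mon q)\circ f^{-1} = q$ --- indeed $s_y\circ f^{-1} = \delta_y$ since $s_y$ is carried by $f^{-1}(y)$ --- whence
\[ S(s\mon q, s\mon q') = \int_X \log\tfrac{\dee q}{\dee q'}(f(x))\,\dee{(s\mon q)} = \int_Y \log\tfrac{\dee q}{\dee q'}(y)\,\dee q = S(q, q'). \]

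Next I would handle the case where $(g,t)$ is not absolutely coherent, so $RE\big((g,t)\big) = \infty$; I must show the composite likewise has infinite relative entropy, i.e. that $s\mon q \ll s\mon q'$ fails whenever $q \ll q'$ fails. This is the converse of Proposition~\ref{giry_abs} for these special kernels: if $q'(B) = 0$ but $q(B) > 0$ for some Borel $B\subset Y$, then the same discrepancy reappears on $f^{-1}(B)\subset X$. Using again that $s_y$ is carried by $f^{-1}(y)$, one gets $s_y(f^{-1}(B)) = \mathbbm{1}_B(y)$ both $q$- and $q'$-almost everywhere, hence $(s\mon q')(f^{-1}(B)) = q'(B) = 0$ while $(s\mon q)(f^{-1}(B)) = q(B) > 0$, witnessing the failure of absolute continuity. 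Together with Proposition~\ref{giry_abs} this shows $(g,t)$ is absolutely coherent if and only if the composite is, so both sides of the lemma equal $\infty$ simultaneously, and the two cases combine to give the result.

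I expect the main obstacle to be the Radon--Nikodym identity. The delicate point is that one must disintegrate $s\mon q'$ rather than $s\mon q$ --- exactly the content of Lemma~\ref{conditional_equal} with $q'$ substituted for $q$ --- and then check that the constant-on-fibres function $\tfrac{\dee q}{\dee q'}\circ f$ passes correctly through the disintegration formula. Once this identity is in hand, the change of variables reducing the $X$-integral to a $Y$-integral and the handling of the $\infty$ case are routine consequences of coherence.
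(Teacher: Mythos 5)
Your proof is correct and takes essentially the same route as the paper's: the same key Radon--Nikodym identity $\frac{\dee (s\mon q)}{\dee (s\mon t\mon m)} = \frac{\dee q}{\dee (t\mon m)}\circ f$, verified on sets $E$ by disintegrating $s\mon t\mon m$ along $f$, invoking Lemma~\ref{conditional_equal} to identify that disintegration with $\{s_y\}$, and using constancy on fibres, followed by the change of variables along the measure-preserving $f$; and the same fibrewise argument transferring a witness $B$ of the failure of $q \ll t\mon m$ to $f^{-1}(B)$ in the non--absolutely-coherent case. The only differences are presentational (abbreviating $t\mon m$ as $q'$ and phrasing the goal as $S(q,q') = S(s\mon q,\, s\mon q')$).
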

\begin{proof}

\textbf{Case I : $(g,t)$ is absolutely coherent.}  Since $(g,t)$ is
absolutely coherent, so is $(g \circ f, s \mon t)$ by Proposition
\ref{giry_abs}.  Hence, to show $RE(g,t) = RE(g \circ f, s \mon t)$ is to show 
\[  \int_Y \log\left( \frac{\dee q}{\dee (t
      \mon m)}\right) \; \dee q  = \int_X \log\left(\frac{\dee ( s \mon q)}{\dee (s\mon t
      \mon m)}\right) \; \dee(s \mon q).\]  

Because $f$ is measure preserving, it is sufficient to show that the following functions on $X$ \[
  \frac{\dee q}{\dee (t\mon m)} \circ f = \frac{\dee (s\mon q)}{\dee (s \mon t \mon
    m)}\text{   $s \mon t \mon m$-almost everywhere.}\] 

By the Radon-Nikodym theorem, itx is sufficient to show that for any $E
\subset X$ measurable set, we have \[ (s \mon q)(E) = \int_E
  \frac{\dee q}{\dee (t\mon m)}\circ f \; \dee (s \mon t \mon m).\] 
The following calculation establishes the above.
\begin{align} 
&\int_E \frac{\dee q}{\dee (t\mon m)} \circ f \; \dee (s \mon t \mon m) \nonumber  \\
&\qquad\qquad = \int_Y \left( \int_{x \in f^{-1}(y) \cap
  E}\left(\frac{\dee q}{\dee (t\mon m)}\circ f \right)(x)\; \dee (s \mon t \mon m)_y
  \right)\; \dee( t \mon m)  \label{dis}\\  
&\qquad\qquad = \int_Y \frac{\dee q}{\dee (t\mon m)}(y)\left( \int_{f^{-1}(y) \cap
  E} \dee (s \mon t \mon m)_y \right) \; \dee(t \mon m) \label{constant} \\  
&\qquad\qquad = \int_Y \frac{\dee q}{\dee (t\mon m)}(y)\left( \int_{f^{-1}(y) \cap
  E} \dee s_y \right) \; \dee (t \mon m) \label{ce} \\  
&\qquad\qquad = \int_Y \frac{\dee q}{\dee (t\mon m)}(y) s_y(E \cap f^{-1}(y)) \; \dee (t \mon m) \nonumber  \\
&\qquad\qquad = \int_Y \frac{\dee q}{\dee (t\mon m)}(y) s_y(E) \; \dee (t \mon m)\label{ce0} \\
& \qquad\qquad =  \int_Y s_y(E) \; \dee q\label{ce1} \\ 
& \qquad\qquad = (s \mon q)(E) \label{ce2}
\end{align}

We get (\ref{dis}) by applying the
disintegration theorem to $f : \left(X, s \mon t \mon m\right) \rightarrow (Y, t \mon m)$.  The equation
(\ref{constant}) follows by using the fact that
$\frac{\dee q}{\dee (t\mon m)} \circ f$ is constant on $f^{-1}(y)$ for every $y$.
To obtain (\ref{ce}) we apply Lemma \ref{conditional_equal}.  To show
(\ref{ce0}) we use the fact that $s_y$ is a probability measure on 
$f^{-1}(y)$.  We get (\ref{ce1}) by the definition of the Radon-Nikodym derivative and
we finally establish (\ref{ce2}) by the definition of Kleisli composition.

\textbf{Case II : $(g,t)$ is not absolutely coherent.} We have $RE((g,t)) =
\infty$.  We show that $(g \circ f, s \mon t)$ is not absolutely coherent,
i.e., $s \mon q$ is not absolutely continuous with respect to $s \mon t \mon m$.

Since, by hypothesis, $q \ll t \mon m$ doesn't hold,
there exists a measurable set $B \subset Y$ such that $(t \mon
m)(B) = 0$ but $q(B) > 0$.  We argue that $(s \mon t \mon m)(f^{-1}(B)) = 0$ and $(s \mon q)(f^{-1}(B)) > 0$.
 On one hand, we have 
\[ (s \mon t \mon m)(f^{-1}(B)) = \int_B s_y(f^{-1}(B)) \dee (t \mon m) \leq (t
  \mon m)(B) = 0.\] 
But on the other hand, since $f$ is a measure preserving map from $(X, s \mon q)$ to
$(Y,q)$, we have $(s \mon q)(f^{-1}(B)) = q(B) > 0$.
  
Therefore,
\[ RE((g,t)) = \infty = RE((g \circ f, s \mon t)). \qedhere\] 
\end{proof}

\begin{thm}[Functoriality]\label{func}
Given arrows $(f,s): (X,p) \rightarrow (Y,q)$ and $(g,t) : (Y,q)
\rightarrow (Z,m)$, we have
\[RE\left((g,t) \circ (f,s)\right) = RE((g,t)) +  RE((f,s)).\]
\end{thm}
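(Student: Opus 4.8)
The plan is to split on the absolute coherence of the two factors and to reduce the genuinely analytic content to a chain rule for Radon--Nikodym derivatives, reusing the computation already carried out inside Lemma~\ref{lem_pre_comp}. Throughout, recall that $(g,t)\circ(f,s) = (g\circ f, s\mon t)$, so the left-hand side is $RE((g\circ f, s\mon t)) = S(p, s\mon t\mon m)$ whenever the composite is absolutely coherent.

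\textbf{Case I: both $(f,s)$ and $(g,t)$ are absolutely coherent.} By Proposition~\ref{comp_stable} the composite is then absolutely coherent as well, so all three relative entropies are given by their integral formula. Since $p \ll s\mon q$ and $s\mon q \ll s\mon t\mon m$ (the latter by Proposition~\ref{giry_abs} applied to $q \ll t\mon m$), the chain rule for Radon--Nikodym derivatives gives
\[ \frac{\dee p}{\dee (s\mon t\mon m)} = \frac{\dee p}{\dee(s\mon q)}\cdot\frac{\dee(s\mon q)}{\dee(s\mon t\mon m)} \qquad \text{$p$-almost everywhere.}\]
Taking logarithms and integrating against $p$ splits $S(p, s\mon t\mon m)$ into $S(p, s\mon q) = RE((f,s))$ plus the term $\int_X \log\frac{\dee(s\mon q)}{\dee(s\mon t\mon m)}\,\dee p$. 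For the second term I would invoke the identity $\frac{\dee(s\mon q)}{\dee(s\mon t\mon m)} = \frac{\dee q}{\dee(t\mon m)}\circ f$ established within the proof of Lemma~\ref{lem_pre_comp}, and then use that $f$ is measure preserving to rewrite $\int_X \big(\log\frac{\dee q}{\dee(t\mon m)}\big)\circ f\,\dee p = \int_Y \log\frac{\dee q}{\dee(t\mon m)}\,\dee q = RE((g,t))$, yielding the claimed equality. The one point needing care is that the splitting of the integral is legitimate: each summand has $p$-integrable negative part (the standard fact about Kullback--Leibler integrands under absolute continuity), so no $\infty-\infty$ arises and the identity persists even when a side is $+\infty$.

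\textbf{Case II: at least one factor is not absolutely coherent.} Then the right-hand side is $\infty$, so it suffices to show the composite is not absolutely coherent, forcing the left-hand side to be $\infty$ too. If $(g,t)$ fails absolute coherence, pick $B\subset Y$ with $(t\mon m)(B)=0$ and $q(B)>0$; exactly as in Case~II of Lemma~\ref{lem_pre_comp} one gets $(s\mon t\mon m)(f^{-1}(B))=0$, whereas measure preservation of $f$ from $(X,p)$ gives $p(f^{-1}(B))=q(B)>0$, so $p\not\ll s\mon t\mon m$. If instead $(g,t)$ is absolutely coherent but $(f,s)$ is not, pick $E\subset X$ with $(s\mon q)(E)=0$ and $p(E)>0$; then $s_y(E)=0$ for $q$-almost every $y$, say off a $q$-null set $N$, and I would pass to $E' := E\setminus f^{-1}(N)$. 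Using that $s_y$ is concentrated on $f^{-1}(y)$, one checks $s_y(E')=0$ for every $y$, whence $(s\mon t\mon m)(E')=\int_Y s_y(E')\,\dee(t\mon m)=0$, while $p(E')=p(E)>0$ since $p(f^{-1}(N))=q(N)=0$. Again $p\not\ll s\mon t\mon m$, so the composite is not absolutely coherent.

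The main obstacle is the second half of Case~II. The naive witness $E$ for $p\not\ll s\mon q$ need not be $(s\mon t\mon m)$-null, because $s\mon t\mon m$ can charge fibers over the $q$-null set $N$ that $q$ ignores; the repair $E' = E\setminus f^{-1}(N)$ fixes this, but its correctness hinges on $s_y$ being concentrated on $f^{-1}(y)$ for those $y\in N$ that $t\mon m$ sees --- that is, on reading coherence through its characterization $\eta_Y = \Gamma(f)\circ s$ as holding pointwise rather than merely $q$-almost everywhere. Keeping this null-set bookkeeping honest, together with the integrability check that legitimizes splitting the integral in Case~I, is where the real effort lies; the categorical identity itself then falls out of the chain rule together with Lemma~\ref{lem_pre_comp}.
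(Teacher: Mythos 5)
Your proof is correct and follows essentially the same route as the paper's: Case I is the paper's chain-rule computation combined with Lemma~\ref{lem_pre_comp} (your explicit push-forward step $\int_X (\log\frac{\dee q}{\dee (t \mon m)})\circ f \; \dee p = \int_Y \log\frac{\dee q}{\dee (t \mon m)} \; \dee q$ and the integrable-negative-part check are welcome refinements the paper glosses over), and your set $E' = E\setminus f^{-1}(N)$ with $N = \{y : s_y(E) > 0\}$ is exactly the paper's $A_0$ from its Case III, verified by the same disjoint-support argument. The pointwise-versus-$q$-almost-everywhere reading of coherence that you flag is relied upon equally in the paper's own Cases II and III (via the footnoted characterization $\eta_Y = \Gamma(f)\circ s$), so you are not assuming more than the paper does.
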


\begin{proof}
Note that by definition, $RE\left((g,t) \circ (f,s)\right) = RE\left((g \circ f,s \mon t)\right)$.  

\textbf{Case I : $(f,s)$ and $(g,t)$ are absolutely coherent.} 
By Proposition \ref{comp_stable}, we have that $(g \circ f,s \mon t)$ is
absolutely coherent.  

\begin{align}
RE\left((g \circ f,s \mon t)\right) \nonumber & 
=  \int_X \log\left(\frac{\dee p}{\dee (s\mon t \mon m)}\right)\; \dee p \nonumber  \\ & 
=  \int_X \log\left(\frac{\dee p}{\dee (s \mon q)} \frac{\dee (s \mon q)}{\dee (s \mon t \mon m)}\right) \; \dee p  \label{chain_rad} \\ &  
= \int_X \log\left(\frac{\dee p}{\dee( s \mon q)}\right) \; \dee p  + \int_X \log\left(\frac{\dee( s \mon q)}{\dee (s\mon t \mon m)}\right) \; \dee p \nonumber \\ &  
= RE((f,s)) + \int_X \log\left(\frac{\dee( s \mon q)}{\dee(s\mon t \mon m)}\right) \; \dee p \nonumber \\ & 
= RE((f,s)) + RE((g,t)) \label{lem_func} 
\end{align}

We get (\ref{chain_rad}) by the chain rule for Radon-Nikodym derivatives and (\ref{lem_func})
by applying Lemma \ref{lem_pre_comp}.

\textbf{Case II : $(g,t)$ is not absolutely coherent.}
We argue that $(g \circ f, s \mon t)$ is not absolutely coherent.
By hypothesis, $q \ll t \mon m$ doesn't hold, so there is a measurable set $B \subset Y$ such that $(t \mon m)(B) = 0$ and $q(B) > 0$.  We show that $(s \mon t \mon m)\left(f^{-1}(B)\right)= 0$ and $p(f^{-1}(B)) > 0$.  On one hand, we have
 \[ (s \mon t \mon m)\left(f^{-1}(B)\right) = \int_B s_y\left(f^{-1}(B)\right) \; \dee (t \mon m) \leq  (t \mon m)(B) = 0,\]
 but on the other hand, we have $p(f^{-1}(B)) = q(B) > 0$.
Therefore \[RE\left((g,t) \circ (f,s)\right) =  \infty = RE((g,t)) + RE((f,s)).\]

\textbf{Case III : $(f,s)$ is not absolutely coherent.}  

This case is not analogous to the previous case since the existence of a measurable set $A \subset X$ such that $(s \mon q)(A) = 0$ and $p(A) > 0$ is surprisingly not enough to conclude that $(s \mon t \mon m)(A) = 0$.

By the hypothesis of $(f,s)$ not being absolutely coherent,  $p \ll s \mon q$ doesn't hold, so there is a measurable set $A \subset X$ such that
$(s \mon q)(A) = 0$ and $p(A) > 0$.  

We partition $A$ into \[ A_{\epsilon} := \{ x \in A \; | \; s_{f(x)}(A) > 0\} \text{  and  } A_{0} := \{x \in A \; | \; s_{f(x)}(A) = 0\} \] and we partition $Y$ into  \[ B_{\epsilon} := \{ y \in Y \; | \; s_y(A) > 0\} \text{  and  } B_{0} := \{y \in Y \; | \; s_{y}(A) = 0\}.  \]

We argue that $(s \mon t \mon m)(A_0) = 0$ and  $p(A_0) > 0$.

Since $A_0 \subset f^{-1}(B_{0})$, $f^{-1}(B_{\epsilon})$ is disjoint from $A_0$, so for all $y \in B_{\epsilon}$ we have $s_y(A_0) = 0$ because their support is disjoint from $A_0$.  On one hand, we thus have  \begin{align*} (s \mon t \mon m)(A_0) &= \int_{Y} s_y(A_0) \; \dee(t \mon m) \\ & = \int_{B_0} s_y(A_0) \; \dee(t \mon m) + \int_{B_{\epsilon}} s_y(A_0) \; \dee(t \mon m) \\ &=  \int_{B_0} s_y(A_0) \; \dee (t \mon m) \\& \leq \int_{B_0} s_y(A) \; \dee (t \mon m) \\ &=  0.  \end{align*}
On the other hand, since we have  $ p(A_0) + p(A_{\epsilon}) = p(A)  > 0$ and
$A_{\epsilon} \subset f^{-1}(B_{\epsilon})$, it suffices to show
$p(f^{-1}(B_{\epsilon})) = 0$  to conclude $p(A_0) > 0$.  

By hypothesis, we have \[(s \mon q)(A) = \int_{B_{0}} s_y(A) \; \dee q
+ \int_{B_{\epsilon}} s_y(A) \; \dee q = 0, \] so $q(B_{\epsilon}) =
0$ and because $f$ is measure preserving, we have $p(f^{-1}(B_{\epsilon}))
= q(B_{\epsilon}) = 0$ as desired.  

So $(g \circ f, s \mon t)$ is not absolutely coherent, hence \[RE\left((g,t) \circ (f,s)\right) =  \infty =  RE((g,t)) + RE((f,s)).\] This completes the proof of this case.
\end{proof}

We have thus shown that $RE$ is a well-defined functor from \Sbs{} to $[0,\infty]$.

\subsection{Convex linearity}
We show below that the relative entropy functor satisfies a convex
linearity property.  In ~\cite{Baez14} convexity looks familiar; here since
we are performing ``large'' sums we have to express it as an integral.
First we define a localized version of the relative entropy.  

Note that Lemma~\ref{conditional_equal} says that $s_y =
(s\mon q)_y$ $q$-almost everywhere.  Thus, in the following there is no
notational clash between the kernel $s_y$ and $(s \mon q)_y$, the later
being the disintegration of $(s \mon q)$ along $f$.  

Given an arrow $(f,s) : (X,p) \rightarrow (Y,q)$ in \Sb{} and a point $y
\in Y$, we denote by $(f,s)_y$, the morphism $(f,s)$ restricted to the pair
of standard Borel spaces $f^{-1}(y)$ and $\{y\}$.  Explicitly, \[ (f,s)_y
  := (\left.f\right|_{f^{-1}(y)},s_y) : (f^{-1}(y),p_y) \longrightarrow
  (\{y\}, \delta_y), \] 
where $\delta_y$ is the one and only probability measure on $\{y\}$.

\begin{defi}
A functor $F$ from \Sbs{} to \monoid{} is \emph{convex linear} if for every
arrow $(f,s) : (X,p) \rightarrow (Y,q)$, we have \[ F\left( (f,s) \right) =
  \int_{Y} F\left( (f,s)_y \right) \; \dee q.\] 
\end{defi}

We will sometimes refer to the relative entropy of $(f,s)_y$ as the \emph{local relative entropy of $(f,s)$ at $y$}.  Before proving that RE is convex linear, we first prove the following lemma.  

\begin{lem}\label{condtional_abs}
Given \[\begin{tikzcd}
(X,p) \arrow[r, "f"] & 
(Y,q)  &
(X,p') \arrow[l,   "f", swap]
\end{tikzcd}\]
where $f$ is a measurable map preserving the measure of both Borel probability measures $p$ and $p'$.  If $p \ll p'$, then $\frac{\dee p_y}{\dee p_y'}$ is defined for $q$-almost every $y$ and \[\frac{\dee p_y}{\dee p_y'} (x) = \frac{\dee p}{ \dee p'}(x)  \; \text{ $p'$-almost everywhere.}\]
 \end{lem}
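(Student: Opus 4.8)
The plan is to show that the globally defined Radon--Nikodym derivative $g := \frac{\dee p}{\dee p'} \geq 0$, read off fibre by fibre, is a version of the fibrewise derivative $\frac{\dee p_y}{\dee p'_y}$. Since $f$ preserves both measures, $q = p \circ f^{-1} = p' \circ f^{-1}$, so Theorem~\ref{Dis} applies to each of $p$ and $p'$ and produces disintegrations $\{p_y\}_{y\in Y}$ and $\{p'_y\}_{y\in Y}$ over the \emph{same} base measure $q$. Writing $\nu_y(A) := \int_A g \; \dee p'_y$ for Borel $A \subseteq X$, it suffices to prove that $\nu_y = p_y$ as measures for $q$-almost every $y$: this gives at once $p_y \ll p'_y$ (so the fibre derivative exists) and that $g$ restricted to $f^{-1}(y)$ is its density.

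First I would verify the identity on ``rectangles''. Fix Borel sets $A \subseteq X$ and $B \subseteq Y$ and evaluate $p\big(A \cap f^{-1}(B)\big)$ in two ways. Disintegrating $p$ and using that $p_y$ is carried by $f^{-1}(y)$ yields
\[ p\big(A \cap f^{-1}(B)\big) = \int_B p_y(A) \; \dee q. \]
On the other hand $p = g\,p'$, so applying property~(iii) of Theorem~\ref{Dis} to the nonnegative function $g\,\mathbbm{1}_{A \cap f^{-1}(B)}$ and disintegrating $p'$ yields
\[ p\big(A \cap f^{-1}(B)\big) = \int_{A \cap f^{-1}(B)} g \; \dee p' = \int_B \nu_y(A) \; \dee q. \]
Equating the two and letting $B$ range over all Borel subsets of $Y$ forces $p_y(A) = \nu_y(A)$ for $q$-almost every $y$, where the exceptional $q$-null set depends, a priori, on the chosen set $A$.

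The main obstacle is exactly this dependence of the null set on $A$, and overcoming it is where the standard Borel hypothesis is used. Since $X$ is standard Borel its Borel $\sigma$-algebra is countably generated, so I can fix a countable algebra $\mathcal{A} \ni X$ generating it. Running the previous step over each of the countably many $A \in \mathcal{A}$ and unioning the resulting $q$-null sets produces a single $q$-null set $N$ such that $p_y$ and $\nu_y$ agree on all of $\mathcal{A}$ for every $y \notin N$. Taking $A = X$ shows $\nu_y(X) = p_y(X) = 1$ for such $y$ (by property~(ii) of Theorem~\ref{Dis}, $p_y$ is a probability measure), so $p_y$ and $\nu_y$ are finite measures agreeing on a generating algebra; by the uniqueness theorem for measures (Dynkin's $\pi$--$\lambda$ theorem) they coincide on every Borel set. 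Hence $p_y = \nu_y$, which means $p_y \ll p'_y$ with $\frac{\dee p_y}{\dee p'_y} = g\big|_{f^{-1}(y)}$ $p'_y$-almost everywhere, for all $y \notin N$.

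It remains only to repackage this as a single $p'$-almost-everywhere statement. The equality $p_y = \nu_y$ for $q$-a.e.\ $y$ says precisely that $g = \frac{\dee p}{\dee p'}$ is a density of $p_y$ against $p'_y$ for $q$-a.e.\ $y$; since $p'$ is recovered as the $q$-average of the $p'_y$ via property~(iii) of Theorem~\ref{Dis}, ``holds $p'_y$-a.e.\ for $q$-a.e.\ $y$'' is equivalent to ``holds $p'$-a.e.'', giving $\frac{\dee p_y}{\dee p'_y}(x) = \frac{\dee p}{\dee p'}(x)$ for $p'$-almost every $x$, as claimed. (Only measurability of $f$ and the disintegration theorem are used; continuity plays no further role.)
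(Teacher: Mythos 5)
Your proof is correct and follows essentially the same route as the paper's: both compare the disintegration of $p$ with the family $A \mapsto \int_A \frac{\dee p}{\dee p'}\,\dee p'_y$ obtained by pushing the global Radon--Nikodym derivative through the disintegration of $p'$, and then identify the two $q$-almost everywhere. The only difference is that you spell out the uniqueness step (a countably generated algebra plus the $\pi$--$\lambda$ theorem to get a single $q$-null exceptional set) that the paper leaves implicit by appealing to the almost-everywhere uniqueness clause of Theorem~\ref{Dis}.
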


\begin{proof}
For an arbitrary measurable function $h : X \rightarrow [0, \infty]$, by first applying the Radon-Nikodym theorem and then the
disintegration theorem on the measurable
function $h \frac{\dee p}{\dee p'}$, we get \[\int_X h \; \dee p = \int_X
  h \frac{\dee p}{\dee p'} \; \dee p' = \int_Y \int_{f^{-1}(y)} h \frac{\dee p}{\dee p'} \; \dee p_y' \; \dee q.\]

Hence, for $q$-almost every $y$, we must have
$\frac{\dee p_y}{\dee p_y'} (x) =\frac{\dee p}{ \dee p'}(x)$ $p'$-almost everywhere.
\end{proof}

\begin{thm}[Convex Linearity]\label{convex}
The functor RE is convex linear, i.e., for every arrow $(f,s) : (X,p) \rightarrow (Y,q)$, we have \[ RE((f,s)) = \int_Y RE\left((f,s)_y\right) \; \dee q .\]
\end{thm}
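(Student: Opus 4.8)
The plan is to split the argument according to whether $(f,s)$ is absolutely coherent, matching the two-branch definition of $RE$, and to reduce the absolutely coherent branch to a single application of disintegration.

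\textbf{Case I: $(f,s)$ absolutely coherent.} Here $p \ll s \mon q$ and $RE((f,s)) = \int_X \log\frac{\dee p}{\dee (s \mon q)}\,\dee p$. First I would record that $f$ pushes both $p$ and $s \mon q$ forward to $q$: for $p$ this is measure preservation, while for $s \mon q$ it follows from coherence, since each $s_y$ lives on $f^{-1}(y)$ and hence $(s \mon q)(f^{-1}(B)) = \int_Y s_y(f^{-1}(B))\,\dee q = q(B)$. Using Lemma \ref{conditional_equal} to identify the disintegration $(s \mon q)_y$ with $s_y$, I would then apply Lemma \ref{condtional_abs} with $p' = s \mon q$ to get, for $q$-almost every $y$, both $p_y \ll s_y$ and
\[ \frac{\dee p_y}{\dee s_y}(x) = \frac{\dee p}{\dee (s \mon q)}(x) \quad (s \mon q)\text{-almost everywhere}. \]
In particular $(f,s)_y$ is absolutely coherent for $q$-a.e. $y$, so $RE((f,s)_y) = \int_{f^{-1}(y)} \log\frac{\dee p_y}{\dee s_y}\,\dee p_y$, and the claimed identity reduces to disintegrating the integrand $h = \log\frac{\dee p}{\dee (s \mon q)}$ by \ref{Dis}(iii) and substituting the fiberwise derivative.

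\textbf{The main obstacle} is that this $h$ is \emph{not} $[0,\infty]$-valued — it is negative wherever the Radon--Nikodym derivative $r := \frac{\dee p}{\dee (s \mon q)}$ lies below $1$ — so \ref{Dis}(iii) does not apply verbatim. I would handle this by decomposing $h = h^+ - h^-$ and disintegrating each non-negative part separately. The negative part is globally integrable, because $\int_X h^-\,\dee p = \int_{\{r<1\}}(-r\log r)\,\dee (s \mon q) \le \tfrac{1}{e}$ using $-r\log r \le 1/e$ on $(0,1)$; finiteness then forces $\int_{f^{-1}(y)} h^-\,\dee p_y < \infty$ for $q$-a.e. $y$, so the local integrals $RE((f,s)_y)$ are well defined and the two parts recombine without an $\infty-\infty$ ambiguity. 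The fiberwise substitution of $\frac{\dee p_y}{\dee s_y}$ for $\frac{\dee p}{\dee (s \mon q)}$ is legitimate since the exceptional $(s \mon q)$-null set is, after disintegration, $s_y$-null and hence (by $p_y \ll s_y$) $p_y$-null for $q$-a.e. $y$.

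\textbf{Case II: $(f,s)$ not absolutely coherent.} Then $RE((f,s)) = \infty$, and I must show the right-hand integral also diverges. I would take a measurable $A \subset X$ with $(s \mon q)(A) = 0$ and $p(A) > 0$. Disintegrating both measures along $f$ gives $\int_Y s_y(A)\,\dee q = 0$, so $s_y(A) = 0$ for $q$-a.e. $y$, whereas $\int_Y p_y(A)\,\dee q = p(A) > 0$ forces $p_y(A) > 0$ on a set of positive $q$-measure. On the intersection of these two $y$-sets, which still has positive $q$-measure, $p_y \not\ll s_y$, so $(f,s)_y$ is not absolutely coherent and $RE((f,s)_y) = \infty$; integrating a function equal to $\infty$ on a positive-measure set yields $\int_Y RE((f,s)_y)\,\dee q = \infty = RE((f,s))$.
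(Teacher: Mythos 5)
Your proposal is correct and follows essentially the same route as the paper: disintegrate the integrand $\log\frac{\dee p}{\dee (s \mon q)}$ along $f$, identify $(s \mon q)_y$ with $s_y$ via Lemma \ref{conditional_equal}, substitute the fiberwise derivative via Lemma \ref{condtional_abs} in the absolutely coherent case, and in the non--absolutely-coherent case exhibit a positive-$q$-measure set of $y$ on which $p_y \not\ll s_y$. Your explicit verification that $f$ pushes $s \mon q$ forward to $q$ and your positive/negative-part decomposition (with the $1/e$ bound on $\int h^-\,\dee p$) to justify applying Theorem \ref{Dis}(iii) to a signed integrand are points the paper passes over silently, so your write-up is if anything more careful on those steps.
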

\pagebreak
\begin{proof}

\textbf{Case I : $(f,s)$ is absolutely coherent.} 

We have 
\begin{align}
RE((f,s)) & = \int_X \log \left( \frac{\dee p}{\dee (s \mon q)} \right)\dee p  \nonumber  \\ 
&= \int_Y \int_{f^{-1}(y)} \log \left( \frac{\dee p}{\dee (s\mon q)} \right) \dee p_y \; \dee q \label{co_1}\\  
&= \int_Y \int_{f^{-1}(y)} \log \left( \frac{\dee p_y}{\dee (s\mon q)_y} \right) \; \dee p_y \; \dee q \label{co_2}\\ 
&=  \int_Y RE((f,s)_y) \; \dee q . \nonumber 
\end{align}
We get (\ref{co_1}) by the disintegration theorem and (\ref{co_2})
by applying Lemma~\ref{condtional_abs}.  

\textbf{Case II : $(f,s)$ is not absolutely coherent.} 
By the hypothesis of $(f,s)$ not being absolutely coherent, there is a measurable set $A \subset X$ such that $(s \mon q)(A) = 0$ and $p(A) > 0$.  Applying lemma~\ref{conditional_equal}, on one hand we have 
\[    \int_Y (s \mon q)_y(A) \; \dee q = \int_Y s_y(A) \; \dee q    = (s \mon q) (A)= 0 , \] but on the other
hand we have \[ \int_Y p_y(A)  \; \dee q = p(A) > 0.\]  Hence, the subset of $Y$ on which
$p_y \ll (s \mon q)_y$ doesn't hold contains a set of measure strictly
greater than $0$.  Therefore,  
\[ RE((f,s)) = \infty =  \int_Y RE\left((f,s)_y\right) \; \dee q. \qedhere \]
\end{proof}

\subsection{Lower-semi-continuity}

Recall that a sequence of probability measures $p_n$ converges strongly to $p$, denoted by $p_n \rightarrow p$, if for all measurable set $E$, one has $\lim_{n \rightarrow \infty} p_n(E) = p(E)$.

The singleton set equipped with the trivial measure, which we will denote by $(1,\delta)$, is a weakly terminal object of \Sbs{}, it is weakly terminal in the sense that for every $(X,p)$ there exist a non-unique arrow $(f,s) : (X,p) \to  (1,\delta)$ in \Sbs{}.

\begin{defi}\label{semicontinuous}
A functor $F$ from \Sbs{} to \monoid{} is \emph{lower semi-continuous} if for every arrow $(f,s) : (X,p) \rightarrow (1,\delta)$, whenever $p_n \rightarrow p$ and $s_n \rightarrow s$, then \[ F \bigg( \!\!\!\begin{tikzcd}
(X,p) \arrow[r, bend right, "f", swap] & 
(1, \delta) \arrow[l, bend right, squiggly, "s", swap] 
\end{tikzcd} \!\!\! \bigg)  \leq \
\liminf_{n \rightarrow \infty} F \bigg( \!\!\! 
\begin{tikzcd}
(X,p_n) \arrow[r, bend right, "f", swap] & 
(1,\delta)
 \arrow[l, bend right, squiggly, "s_n", swap]  
\end{tikzcd} \!\!\!\bigg).
\]
\end{defi}

Recall that in ~\cite{Baez14}, lower semicontinuity was defined on $\Fins$ as the following.

\begin{defi}[Baez and Fritz]\label{semicontinuous_finstats}
A functor $F : \Fins \rightarrow [0,\infty]$ is \emph{lower semicontinuous} if for any sequence of morphisms $(f,s_i) : (X,p_i) \rightarrow (Y,q_i)$ that converges\footnote{Where convergence is just pointwise convergence.} to a morphism $(f,s) : (X,p) \rightarrow (Y,q)$, we have \[ F(f,s) \leq \liminf_{i \rightarrow \infty} F(f,s_i).\]
\end{defi}

Recalling that $\mathbf{FP}$ stands for the subcategory of \Sbs{} consisting of the same objects, but with only those morphisms where the hypothesis is optimal. We claim that a lower semi-continuous (as defined in \mbox{Definition \ref{semicontinuous}}) functor $F$ that vanishes on $\mathbf{FP}$ restricts to a lower semi-continuous functor on $\Fins$ (as defined in \mbox{Definition \ref{semicontinuous_finstats}}).  To see this, note that, given a sequence of morphisms $(f,s_i) : (X,p_i) \rightarrow (Y,q_i)$ that converges pointwise to a morphism $(f,s) : (X,p) \rightarrow (Y,q)$, we can recover \[F(f,s) \leq \liminf_{i \rightarrow \infty} F(f,s_i)\]
from\[
\begin{tikzcd}
(X,p_i) \arrow[r, bend right, "f", swap] & 
\left( Y, q_i \right) \arrow[l, bend right, squiggly, "s_i", swap] 
\arrow[r, bend right, "g", swap] & (1,\delta) \arrow[l, bend right, squiggly, "q_i", swap]
\end{tikzcd}
\]
and 
\begin{align*} F(f,s) & =  F(f,s) + \overbrace{F(g,q)}^{0}  = F(g \circ f, s \mon q) \\ & \leq  \liminf_{i \rightarrow \infty} F(g \circ f,s_i \mon q_i)  = \liminf_{i \rightarrow \infty} F(f, s_i)+\liminf_{i \rightarrow \infty}\underbrace{F(g,q_i)}_{0} \\ &= \liminf_{i \rightarrow \infty} F(f,s_i).
\end{align*}
Note that, on finite sets, converging pointwise is equivalent to strong convergence.  

\begin{thm}[Lower semi-continuity]
The functor $RE$ is lower semi-continuous.
\end{thm}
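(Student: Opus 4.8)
The plan is to reduce the statement to the joint lower semi-continuity of the Kullback--Leibler divergence and then to exploit its variational (Gibbs / Donsker--Varadhan) representation. First I would observe that in Definition~\ref{semicontinuous} the target is the one-point space $(\{y\},\delta_y)$, so $f$ is the unique (constant) measurable map $X \to \{y\}$ and the hypothesis $s$ is simply a single probability measure $s_y$ on $X = f^{-1}(y)$. A direct computation with the Giry multiplication gives $(s \mon \delta_y)(A) = \int_{\{y\}} s_{y'}(A)\,\mathrm{d}\delta_y = s_y(A)$, i.e. $s \mon \delta_y = s_y$, hence $RE((f,s)) = S(p, s_y)$ with the usual convention $S(p,q) = \infty$ when $p \not\ll q$. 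This convention is exactly consistent with the value $\infty$ assigned to non-absolutely-coherent morphisms, so the two cases in the definition of $RE$ collapse into a single quantity. Writing $q := s_y$ and $q_n := (s_n)_y$, the theorem becomes the statement that $p_n \to p$ and $q_n \to q$ strongly imply
\[ S(p,q) \le \liminf_{n\to\infty} S(p_n,q_n). \]

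Next I would invoke the Gibbs variational formula
\[ S(p,q) = \sup_{g}\Big( \int_X g \, \mathrm{d}p - \log \int_X e^{g} \, \mathrm{d}q \Big), \]
where $g$ ranges over all bounded measurable functions $g : X \to \Real$. This identity holds on an arbitrary measurable space, and it automatically returns $\infty$ precisely when $p \not\ll q$: if $A$ is a measurable set with $q(A) = 0 < p(A)$, then taking $g = c\,\mathbf{1}_A$ and letting $c \to \infty$ drives the bracketed expression to $+\infty$. Using the variational formula therefore removes the need to treat absolutely coherent and non-absolutely-coherent morphisms separately.

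The analytic heart of the argument is that each functional $F_g(p,q) := \int_X g\,\mathrm{d}p - \log\int_X e^{g}\,\mathrm{d}q$ is jointly continuous under strong convergence. The key lemma is that strong (setwise) convergence of probability measures entails convergence of integrals of bounded measurable integrands: if $p_n \to p$ strongly then $\int_X g\,\mathrm{d}p_n \to \int_X g\,\mathrm{d}p$ for every bounded measurable $g$. This is proved by approximating $g$ uniformly by simple functions and using that the $p_n$ are probability measures (so the error is controlled by $\|g - \phi\|_\infty$). Applying this to $g$ and to the bounded function $e^{g}$, and using continuity of $\log$ on $(0,\infty)$, gives $F_g(p_n,q_n) \to F_g(p,q)$.

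Finally I would assemble the pieces. For each fixed bounded measurable $g$ we have $F_g \le S$ pointwise, so
\[ F_g(p,q) = \lim_{n\to\infty} F_g(p_n,q_n) = \liminf_{n\to\infty} F_g(p_n,q_n) \le \liminf_{n\to\infty} S(p_n,q_n). \]
Taking the supremum over $g$ on the left yields $S(p,q) \le \liminf_{n} S(p_n,q_n)$, which is the desired inequality. The main obstacle is the middle part: carefully establishing (or citing) the variational representation of relative entropy on a standard Borel space and verifying the setwise-convergence lemma, together with the bookkeeping around the $p \not\ll q$ case so that a single supremum formula covers both families of morphisms uniformly. Everything else is then a soft \emph{``supremum of continuous functionals is lower semi-continuous''} argument. (An alternative to the Donsker--Varadhan formula, closer in spirit to the finite case already treated via $S_{fin}$, is the representation of $S$ as a supremum over finite measurable partitions, using the joint lower semi-continuity of the scalar map $(a,b)\mapsto a\log(a/b)$; this trades continuity of each piece for lower semi-continuity but yields the same conclusion.)
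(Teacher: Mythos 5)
Your proposal is correct, but it is doing far more work than the paper, which disposes of this theorem in one line by citing Pinsker (Section 2.4 of his book) for the lower semi-continuity of the Kullback--Leibler divergence under setwise convergence. Your reduction of the statement to joint lower semi-continuity of $S(p,q)$ in both arguments (via $s \mon \delta_y = s_y$, and the observation that the convention $S(p,q)=\infty$ for $p \not\ll q$ matches the value $RE$ assigns to non-absolutely-coherent morphisms) is exactly the right preprocessing, and your Donsker--Varadhan argument is a sound, self-contained proof of that analytic fact: each $F_g$ is indeed jointly continuous under strong convergence because setwise convergence of probability measures gives convergence of integrals of bounded measurable integrands, and a supremum of continuous functionals is lower semi-continuous; the $c\,\mathbf{1}_A$ computation correctly shows the variational formula detects failure of absolute continuity. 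What your approach buys is a complete proof in place of a citation; what it costs is the need to establish (or carefully cite) the variational representation over bounded \emph{measurable} test functions on a general measurable space. The finite-partition alternative you sketch at the end --- $S(p,q)$ as a supremum over finite measurable partitions of the finite relative entropies, each of which is lower semi-continuous by the scalar inequality for $(a,b)\mapsto a\log(a/b)$ --- is in fact essentially Pinsker's own route, so that variant is the one closest in spirit to what the paper actually invokes, and it also meshes naturally with the paper's later use of partition approximations in the uniqueness proof.
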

\begin{proof}

Let us denote \[a:=\liminf_{n \rightarrow \infty} RE \bigg( \!\!\! 
\begin{tikzcd}
(X,p_n) \arrow[r, bend right, "f", swap] & 
(1,\delta)
 \arrow[l, bend right, squiggly, "s_n", swap]  
\end{tikzcd} \!\!\!\bigg) .\] If $a = \infty$, then the statement holds automatically, so we assume that $a < \infty$.

By virtue of $a$ being a limit inferior, we can pick a subsequence $\{n_i\}_{i \in \mathbb{N}}$ such that for all $i \in \mathbb{N}$, we have both
\[
RE \bigg( \!\!\! 
\begin{tikzcd}
(X,p_{n_i}) \arrow[r, bend right, "f", swap] & 
(1,\delta)
 \arrow[l, bend right, squiggly, "s_{n_i}", swap]  
\end{tikzcd} \!\!\!\bigg) = \int_X \log\left( \frac{\dee p_{n_i}}{\dee s_{n_i}} \right) \dee p_{n_i} < \infty
\] and \[
\lim_{i\rightarrow \infty} RE \bigg( \!\!\! 
\begin{tikzcd}
(X,p_{n_i}) \arrow[r, bend right, "f", swap] & 
(1,\delta)
 \arrow[l, bend right, squiggly, "s_{n_i}", swap]  
\end{tikzcd} \!\!\!\bigg) = a.\]

Now, instantiating statements (2.4.7) and (2.4.9) from Pinsker~\cite[Section 2.4]{Pinsker60}\footnote{Or equivalently, and perhaps a more accessible reference, Theorem 1 from \cite{posner1975random}.} in our setting, we have
 \[ RE \bigg( \!\!\!\begin{tikzcd}
(X,p) \arrow[r, bend right, "f", swap] & 
(1, \delta) \arrow[l, bend right, squiggly, "s", swap] 
\end{tikzcd} \!\!\! \bigg)  \leq \
\lim_{i \rightarrow \infty} F \bigg( \!\!\! 
\begin{tikzcd}
(X,p_{n_i}) \arrow[r, bend right, "f", swap] & 
(1,\delta)
 \arrow[l, bend right, squiggly, "s_{n_i}", swap]  
\end{tikzcd} \!\!\!\bigg) = a,
\]
as desired.
\end{proof}

\section{Uniqueness}
We now show that the relative entropy is, up to a multiplicative constant,
the unique functor satisfying the conditions established so far.  We first
prove a crucial lemma.  

\begin{lem}\label{lem_strong}
  Let $X$ be a Borel space equipped with probability measures $p$ and $q$,
  if $p \ll q$, then we can find a sequence of simple functions $p_n^*$ on
  $X$ such that for the sequence of probability measures
  $p_n(E) := \int_E p_n^* \; \dee q$, we have that $p_n$ and $p$ agree on
  the elements of the partition on $X$ induced by $p_n^*$ and moreover,
  $p_n \rightarrow p$ strongly.
\end{lem}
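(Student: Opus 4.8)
The plan is to realize the desired simple functions as $q$-conditional expectations of the Radon--Nikodym density of $p$ with respect to $q$, and then to extract strong convergence from the martingale convergence theorem. Since $p \ll q$, the derivative $g := \frac{\dee p}{\dee q}$ exists as a non-negative $q$-integrable function with $\int_X g \; \dee q = 1$. The idea is to average $g$ over finer and finer finite partitions of $X$; because $X$ is standard Borel its Borel $\sigma$-algebra is countably generated, which is exactly what makes such an exhausting sequence of partitions available.

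First I would fix a countable family $\{B_1, B_2, \dots\}$ generating $\mathcal{B}(X)$ and let $\mathcal{P}_n = \{A_1^n, \dots, A_{k_n}^n\}$ be the finite partition of $X$ into the atoms of the algebra generated by $B_1, \dots, B_n$. These partitions are increasing and satisfy $\sigma\left(\bigcup_n \mathcal{P}_n\right) = \mathcal{B}(X)$. I would then set
\[
p_n^* := \sum_{i \,:\, q(A_i^n) > 0} \frac{p(A_i^n)}{q(A_i^n)} \; \mathbbm{1}_{A_i^n},
\]
a simple function (the atoms with $q(A_i^n) = 0$ also satisfy $p(A_i^n) = 0$ by $p \ll q$, so omitting them is harmless). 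A one-line computation gives $\int_X p_n^* \; \dee q = \sum_i p(A_i^n) = 1$, so each $p_n(E) := \int_E p_n^* \; \dee q$ is genuinely a probability measure, and on each atom $p_n(A_j^n) = \frac{p(A_j^n)}{q(A_j^n)} q(A_j^n) = p(A_j^n)$. Since $p_n^*$ is constant on every atom of $\mathcal{P}_n$, each of its level sets is a union of such atoms; as $p_n$ and $p$ agree on every atom, finite additivity shows they agree on every element of the partition induced by $p_n^*$, which is the first conclusion.

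The crucial observation is that $p_n^*$ is precisely the conditional expectation $\mathbb{E}_q[g \mid \sigma(\mathcal{P}_n)]$: its value $\frac{p(A_i^n)}{q(A_i^n)} = \frac{1}{q(A_i^n)} \int_{A_i^n} g \; \dee q$ on $A_i^n$ is exactly the $q$-average of $g$ there. Hence $(p_n^*)_n$ is a (Doob) martingale for the filtration $(\sigma(\mathcal{P}_n))_n$, and since this filtration increases to $\mathcal{B}(X)$, L\'evy's upward martingale convergence theorem gives $p_n^* \to g$ both $q$-almost everywhere and in $L^1(q)$. The $L^1(q)$ convergence immediately yields the required strong convergence, since for every measurable $E$,
\[
|p_n(E) - p(E)| = \left| \int_E (p_n^* - g) \; \dee q \right| \le \int_X |p_n^* - g| \; \dee q = \|p_n^* - g\|_{L^1(q)} \longrightarrow 0;
\]
the bound is in fact uniform in $E$, so one obtains $p_n \to p$ even in total variation, which is stronger than what is asked.

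The only genuine subtlety is this convergence step: the rest is bookkeeping arranged to expose the martingale. The main obstacle is therefore to recognize the Doob-martingale structure and to invoke its convergence theorem. One must also be careful that it is precisely the standard Borel hypothesis that supplies a countable generating family, and hence the increasing sequence of finite partitions exhausting $\mathcal{B}(X)$; without it the conditional expectations need not converge to $g$ and the construction would fail.
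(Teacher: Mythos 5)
Your proof is correct, but it takes a genuinely different route from the paper's. The paper never invokes a countable generating family: it partitions $X$ directly into the level sets $X_{n,k} := \{x \in X : \frac{\dee p}{\dee q}(x) \in [k2^{-n},(k+1)2^{-n})\}$ of a finite-valued version of the density (plus a tail piece where the density is at least $n$) and, just as you do on your atoms, sets $p_n^*$ equal to the $q$-average $p(X_{n,k})/q(X_{n,k})$ on each piece. Because the partition is adapted to the density, convergence becomes elementary: once $n$ exceeds $\frac{\dee p}{\dee q}(x)$, both the density at $x$ and its average over the cell containing $x$ lie in the same interval of length $2^{-n}$, so $|p_n^*(x)-\frac{\dee p}{\dee q}(x)|\le 2^{-n}$; the paper then dominates $p_n^*$ by $\frac{\dee p}{\dee q}+1$ and applies dominated convergence to each $p_n(E)$. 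Your argument instead recognizes $p_n^*$ as the conditional expectation of the density along a refining filtration generated by a countable basis and invokes L\'evy's upward martingale convergence theorem. Both are sound. Yours buys $L^1(q)$, hence total-variation, convergence explicitly and makes visible where countable generation of the $\sigma$-algebra enters; the paper's buys self-containedness and, notably, needs no hypothesis on $X$ beyond measurability, since the level-set partition exists on an arbitrary probability space, whereas your construction genuinely uses the standard Borel (countably generated) assumption. The one point worth flagging is that the subsequent uniqueness theorem reuses the specific partitions $\{X_{n,k}\}$ produced by this lemma, so substituting your construction would require rechecking that downstream argument against your atoms; for the lemma as stated, your proof is complete.
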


\begin{proof}
  We write $I_{n,k}$ for the interval $[k2^{-n}, (k +1) 2^{-n})$ and
  $I_{n, \leq}$ for the interval $[n,\infty)$.  Denote by $K_n$ the index
  set $\{0, 1, \ldots, n2^n -1, \leq \}$ of $k$.  We fix a version
  $\frac{\dee p}{\dee q}$ of the Radon-Nikodym such that
  $\frac{\dee p}{\dee q} < \infty$ everywhere.  We define a family of
  partitions and a family of simple functions as follows:
  \[ X_{n,k} := \left\{ x' \in X \; | \; \frac{\dee p}{\dee q}(x') \in
      I_{n,k}\right\}, \qquad p_n^*(x) :=
    \frac{p\left(X_{n,k}\right)}{q\left(X_{n,k}\right)} \text{ on } x \in
    X_{n,k}.\]

Every function induces a partition on the domain; if moreover the function
is simple, the induced partition is finite.  

We first note that $p_n$ and $p$ agree on the elements of the partition induced by $p_n^*$: \[ p_n(X_{n,k}) =  \int_{X_{n,k}} p_n^* \; \dee q =\int_{X_{n,k}} \frac{p\left(X_{n,k}\right)}{q\left(X_{n,k}\right)} \; \dee q = \frac{p\left(X_{n,k}\right)}{q\left(X_{n,k}\right)}q(X_{n,k}) = p(X_{n,k}).\]

Next, we prove the strong convergence of $p_n \rightarrow p$.  We first show $p_n^* \rightarrow \frac{\dee p}{\dee q}$ pointwise.  Let $x \in X$.  Pick $N$ large enough such that $\frac{\dee p}{\dee q}(x) \leq N$.  For a fixed integer $n \geq N$, there is exactly one $k_n$ for which $x \in X_{n,k_n}$.  On the one hand, we have  $k_n2^{-n} \leq \frac{\dee p }{\dee q}(x) \leq (k_n +1)2^{-n}$ on $X_{n,k_n}$.  But on the other hand, by  integrating over $X_{n,k_n}$ and dividing everything by $q(X_{n,k_n})$, we also have $k_n2^{-n}  \leq \frac{p\left(X_{n,k_n}\right)}{q\left(X_{n,k_n}\right)} \leq (k_n+1)2^{-n}$ on $X_{n,k_n}$.  We thus get pointwise convergence since we have
 \[ \left| p_n^*(x) - \frac{\dee p}{\dee q}(x)\right| = \left| \frac{p\left(X_{n,k_n}\right)}{q\left(X_{n,k_n}\right)} - \frac{\dee p}{\dee q}(x) \right|  \leq 2^{-n} \text{ for any } n \geq N.  \]

From the above inequality and the choice of $N$, we note the following \[ p_n^*(x) \leq \frac{\dee p}{\dee q}(x) + 2^{-n} \leq \frac{\dee p}{\dee q}(x) + 1, \quad \text{for $x$ with $\frac{\dee p}{\dee q}(x) < n$},\] \[ p_n^*(x) = p(X_{n,\leq}) \leq 1 \leq\frac{\dee p}{\dee q}(x) + 1, \quad \text{ for $x$ with $\frac{\dee p}{\dee q}(x) \geq n$}.\]
 So for all $n$, we can bound $p_n^*(x)$ everywhere by the integrable function $g(x) : = \frac{\dee p}{\dee q}(x) + 1$.  Given a measurable set $E \subset X$, we can thus apply Lebesgue's dominated convergence theorem.  We get \[ \lim_{n \rightarrow \infty} p_n(E) = \lim_{n \rightarrow \infty} \int_E p_n^* \; \dee q = \int_E \lim_{n \rightarrow \infty} p_n^* \; \dee q = \int_E \frac{\dee p }{\dee q} \; \dee q = p(E). \qedhere \]
\end{proof}

Before proving uniqueness, we recall the main theorem of Baez and Fritz~\cite{Baez14} on \Fins{}.
\begin{thm}\label{origin}
Suppose that a functor \[F: \Fins{} \rightarrow [0,\infty]\] is lower
semicontinuous, convex linear and vanishes on $\mathbf{FP}$.  Then for some $0 \leq c
\leq \infty$ we have $F(f,s) = cRE_{fin}(f,s)$ for all morphisms $(f,s)$ in
\Fins.  
\end{thm}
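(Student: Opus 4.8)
The plan is to reduce the statement to a functional equation for a function of a \emph{pair} of finite probability distributions, and then to solve that equation using the three hypotheses on $F$. First I would exploit convex linearity: for any morphism $(f,s):(X,p)\to(Y,q)$ it gives
\[ F(f,s) = \sum_{y \in Y} q(y)\, F\big((f,s)_y\big), \]
where $(f,s)_y : (f^{-1}(y),p_y) \to (\{y\},\delta_y)$ has hypothesis $s_y$. Hence $F$ is entirely determined by its values on morphisms into a one-point space, which are parametrized by a pair $(a,b)$ of probability distributions on a common finite set $Z$ (with $a$ the measure and $b$ the hypothesis); write $\Phi(a,b)$ for this value. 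Since $RE_{fin}$ obeys the same decomposition — that is precisely its convex linearity — it is enough to prove $\Phi(a,b) = c\,S(a,b)$ for all such pairs, where $S(a,b) := \sum_z a(z)\ln(a(z)/b(z))$. Two constraints come for free: vanishing on $\mathbf{FP}$ forces $\Phi(a,a)=0$ (on a one-point base the optimal hypothesis is $b=a$), and $\Phi$ inherits lower semicontinuity from $F$.

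Next I would extract a chain rule from functoriality. Applying $F$ to a composite $(X,p)\xrightarrow{(f,s)}(Y,q)\xrightarrow{(g,r)}(\{\ast\})$, where $r$ is a hypothesis (a distribution) on $Y$, and combining with the convex-linear expansion of $F(f,s)$, yields
\[ \Phi(p,\, s \monf r) = \sum_{y \in Y} q(y)\,\Phi(p_y, s_y) + \Phi(q,r), \]
the abstract analogue of the chain rule for relative entropy. Using this to peel off one point of $Z$ at a time — split $Z$ into a chosen singleton and its complement, choose the hypothesis so that $s\monf r$ recovers $b$, and use $\Phi(\delta,\delta)=0$ — reduces $\Phi(a,b)$ to a $\big(1-a(z_0)\big)\Phi(\cdot,\cdot)$ term on one fewer point plus a purely two-point term. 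Since $S$ satisfies the identical chain rule, an induction on $|Z|$ shows it suffices to match $\Phi$ with $cS$ on the two-point family $\Psi(u,v):=\Phi((u,1-u),(v,1-v))$.

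The heart of the argument is the two-point case. I would produce the multiplicative Cauchy equation by composing explicit morphisms: a splitting $(\{1,2,3\},\delta_1)\to(\{a,b\},\delta_a)$ whose hypothesis contributes a factor $y$, followed by $(\{a,b\},\delta_a)\to(\{\ast\})$ whose hypothesis contributes a factor $x$; functoriality then gives, for $g(v):=\Psi(1,v)=\Phi(\delta_1,(v,1-v))$,
\[ g(xy) = g(x) + g(y), \qquad x,y \in (0,1]. \]
In logarithmic coordinates this is additivity of a nonnegative, lower semicontinuous function, which forces $g(v) = c\,\ln(1/v)$ for a unique $c \in [0,\infty]$ (and $g\equiv\infty$ on $(0,1)$ when $c=\infty$). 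This fixes the constant. Propagating the logarithm to all of $\Psi$ through the chain rule and lower semicontinuity, and then reassembling via the peeling induction and convex linearity, gives $\Phi = cS$ and therefore $F(f,s) = c\,RE_{fin}(f,s)$ for every morphism.

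The main obstacle is this last analytic step: passing from the Cauchy equation plus merely lower semicontinuity (rather than full continuity) to the conclusion $g = c\,\ln(1/\cdot)$, while uniformly handling the value $+\infty$, the boundary behaviour as $v\to 0$, and the well-definedness needed so that the composite construction genuinely closes the functional equation. Everything else is bookkeeping with the chain rule and convex linearity; the rigidity that pins down the logarithm is where the three hypotheses must all be used together.
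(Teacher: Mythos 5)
First, a point of reference: the paper does not prove this theorem at all --- it is quoted from Baez and Fritz~\cite{Baez14} as an input to the uniqueness argument on \Sbs{} --- so your attempt has to be measured against their proof. Your overall architecture does match it: use convex linearity to reduce $F$ to a function $\Phi(a,b)$ of pairs of distributions attached to morphisms into a point, extract the chain rule
$\Phi(p,s\monf r)=\sum_y q(y)\Phi(p_y,s_y)+\Phi(q,r)$
from functoriality, derive the multiplicative Cauchy equation for $g(v)=\Phi(\delta_1,(v,1-v))$, and solve it. Two small corrections there: solving the Cauchy equation does not in fact need lower semicontinuity, since nonnegativity of $F$ already makes $g$ monotone (for $x\le y$, $g(x)=g(y)+g(x/y)\ge g(y)$), which pins down $g(v)=c\ln(1/v)$ with $c\in[0,\infty]$; and your three-point composite lands in a three-point space with hypothesis $(xy,\,x(1-y),\,1-x)$, so closing the equation $g(xy)=g(x)+g(y)$ needs the auxiliary lemma that $\Phi(\delta_1,\cdot)$ depends only on the mass assigned to the supported point --- itself an instance of the chain rule plus vanishing on $\mathbf{FP}$, but it must be stated.

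The genuine gap is the step you wave at as ``propagating the logarithm to all of $\Psi$ through the chain rule and lower semicontinuity.'' That step, not the Cauchy equation, is the heart of the proof, and your peeling induction cannot supply it: peeling reduces an $n$-point $\Phi(a,b)$ to two-point values $\Psi(u,v)$ with \emph{general} $u\in(0,1)$, whereas your Cauchy-equation argument only determines $\Psi(1,v)=g(v)$, and chain-rule applications that stay inside a two-point space are circular ($\Psi(u,v)=0+\Psi(u,v)$). One needs an additional construction, e.g.\ the doubling trick: fold $Z\sqcup Z\to Z$, place all of the mass $a$ on the first copy, take fiber kernels $s_z=(\lambda_z,1-\lambda_z)$ with $\lambda_z=\epsilon\,a_z/b_z$ and hypothesis $b$ downstream, so that the composite hypothesis restricted to the supported copy is proportional to $a$; then $\Phi(a,a)=0$ and functoriality give
\[
g(\epsilon)=\sum_z a_z\,g(\epsilon a_z/b_z)+\Phi(a,b),
\]
whence $\Phi(a,b)=c\sum_z a_z\ln(a_z/b_z)$ for $0<c<\infty$ and $a\ll b$ with $b$ of full support. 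Finally, the degenerate cases --- which your sketch does not address --- are exactly where lower semicontinuity is indispensable and where care is needed because it only yields \emph{upper} bounds: $c=0$ (one must show $F\equiv 0$ even on pairs with $a\not\ll b$, via full-support approximations of $b$); $a\not\ll b$ with $c>0$ (a lower bound $\Phi(a,b)=\infty$ is needed, obtainable from a three-point identity forcing $\Psi(u,0)=\infty$, not from semicontinuity); and $c=\infty$, where the displayed identity degenerates to $\infty=\infty$ and a separate argument is required.
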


We are now ready to extend this characterization to \Sbs{}.
\begin{thm}
Suppose that a functor \[F: \Sbs \rightarrow [0,\infty]\] 
is lower semicontinuous, convex linear and vanishes on $\mathbf{FP}$.  
Then for some $0 \leq c
\leq \infty$ we have $F(f,s) = c RE(f,s)$ for all morphisms.  
\end{thm}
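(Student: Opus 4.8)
The plan is to bootstrap from the finite characterization (Theorem~\ref{origin}), using convex linearity to localize the problem to a single point of the codomain and functoriality to discretize that local problem. First I would fix the constant $c$. Finite sets are standard Borel spaces, so $\Fins$ embeds in $\Sbs$, and each hypothesis on $F$ restricts: convex linearity collapses to the finite sum $F(f,s)=\sum_y q(y)\,F((f,s)_y)$ when $q$ is atomic, lower semicontinuity restricts by the argument already recorded after Definition~\ref{semicontinuous_finstats}, and vanishing on $\mathbf{FP}$ is automatic. Theorem~\ref{origin} then produces $0\le c\le\infty$ with $F=c\,RE_{fin}$ on $\Fins$; this is the $c$ of the statement.

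Next I would localize. As both $F$ and $RE$ are convex linear (the latter by Theorem~\ref{convex}), any arrow $(f,s):(X,p)\to(Y,q)$ satisfies $F(f,s)=\int_Y F((f,s)_y)\,dq$ and $RE(f,s)=\int_Y RE((f,s)_y)\,dq$, and $\int_Y c\,RE((f,s)_y)\,dq=c\int_Y RE((f,s)_y)\,dq$ under the convention $0\cdot\infty=0$; hence it suffices to prove $F=c\,RE$ on the local arrows $(f,s)_y$, which have a singleton codomain. Such an arrow is just a space $(X,p)$ together with a single hypothesis $r\in\Gamma(X)$, with $s\mon\delta=r$ and $RE=S(p,r)$ when $p\ll r$. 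In the absolutely coherent case $p\ll r$ I would prove the two inequalities by different devices. For the lower bound, the dyadic partition $\{X_{n,k}\}$ of Lemma~\ref{lem_strong} gives a finite quotient $\pi_n:X\to K_n$; I factor the arrow as $(X,p)\xrightarrow{(\pi_n,\rho_n)}(K_n,\bar p)\xrightarrow{(!,\bar r)}(\{*\},\delta)$ with $\bar p=(\pi_n)_*p$, $\bar r=(\pi_n)_*r$ and $\rho_n(k)=r|_{X_{n,k}}/r(X_{n,k})$, and one checks that $(\pi_n,\rho_n)$ is coherent and that the composite Kleisli hypothesis $\rho_n\mon\bar r$ is exactly $r$. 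Functoriality and $F\ge 0$ give $F(p,r)\ge F(!,\bar r)=c\,S(\bar p,\bar r)$ (the equality by Theorem~\ref{origin}), and letting $n\to\infty$ with the monotone convergence $S(\bar p,\bar r)\uparrow S(p,r)$ of finite relative entropies under dyadic refinement yields $F(p,r)\ge c\,S(p,r)$. For the upper bound I use instead the approximants $p_n$ of Lemma~\ref{lem_strong}, whose $r$-density is constant on each block; this makes $\sigma_n(k)=r|_{X_{n,k}}/r(X_{n,k})$ an optimal hypothesis for $(\pi_n,\sigma_n):(X,p_n)\to(K_n,\bar p_n)$, so that arrow lies in $\mathbf{FP}$ and contributes nothing. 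The same factorization over $p_n$ again recovers $r$ in the composite, so functoriality gives $F(p_n,r)=c\,S(\bar p_n,\bar r)=c\,RE(p_n,r)$; since $p_n\to p$ strongly with hypothesis held fixed at $r$, lower semicontinuity (Definition~\ref{semicontinuous}) gives $F(p,r)\le\liminf_n F(p_n,r)=c\,S(p,r)$. The two bounds give $F(p,r)=c\,RE(p,r)$.

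For the remaining one-point case $p\not\ll r$ I would use a Lebesgue decomposition: pick $N$ with $r(N)=0$, $p(N)>0$ and $p|_{N^c}\ll r$, and factor through the two-block quotient $\{N,N^c\}$ with $\rho(1)=p|_N/p(N)$ and $\rho(2)=r|_{N^c}$, once more recovering $r$ in the composite. Functoriality gives $F(p,r)=c\,S(\bar p,\bar r)+F(\pi,\rho)$ with $S(\bar p,\bar r)=\infty$ (since $\bar r$ kills the first block while $\bar p$ does not), while convex linearity splits $F(\pi,\rho)=\bar p(1)F((\pi,\rho)_1)+\bar p(2)F((\pi,\rho)_2)$, where the first term vanishes because $(\pi,\rho)_1\in\mathbf{FP}$ and the second is finite by the absolutely coherent case. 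Thus $F(p,r)=\infty=c\,RE(p,r)$ when $c>0$ and $F(p,r)=0=c\,RE(p,r)$ when $c=0$, in agreement with the conventions. Reassembling through the localization step then gives $F=c\,RE$ on all of $\Sbs$.

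The hard part will be the absolutely coherent one-point case, and precisely the tension that lower semicontinuity only ever delivers one inequality whereas we need an equality. The resolution is to obtain the two inequalities by genuinely different means: the lower bound from functoriality and nonnegativity alone, needing no limit of $F$ but only the analytic fact $S(\bar p,\bar r)\uparrow S(p,r)$, and the upper bound from the exact finite identity $F(p_n,r)=c\,S(\bar p_n,\bar r)$ fed into lower semicontinuity. What makes both factorizations legitimate is the sharp content of Lemma~\ref{lem_strong} — that $p_n$ agrees with $p$ on the partition blocks, equivalently that $p_n$ has $r$-density constant on each block — since this is exactly what forces the composite Kleisli hypothesis to equal $r$ and makes the approximating arrow optimal.
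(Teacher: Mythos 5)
Your proposal is correct and follows essentially the same route as the paper: fix $c$ by restricting to \Fins{} and invoking Theorem~\ref{origin}, localize by convex linearity to arrows with one-point codomain, and then sandwich $F$ between the two factorizations through the dyadic level-set partition of Lemma~\ref{lem_strong} --- the factorization with source measure $p$ giving the lower bound by functoriality and nonnegativity, and the factorization with source measure $p_n$ (whose first leg lies in $\mathbf{FP}$ precisely because $p_n$ has constant $r$-density on each block, so that its disintegration coincides with that of $r$) giving the upper bound by lower semicontinuity. The one place you genuinely diverge is in closing the limit: you invoke the analytic fact that $S(\bar p_n,\bar r)\uparrow S(p,r)$ under dyadic refinement, a Gelfand--Yaglom--P\'erez-type statement that is true but that you assert rather than prove, whereas the paper sidesteps it entirely by observing that $cRE$ itself satisfies the same two inequalities as $F$ (it has all four properties and restricts to $cRE_{fin}$), so both quantities are squeezed between $\limsup_n cRE((f_n,s_n))$ and $\liminf_n cRE((f_n,s_n))$ and must coincide; that trick buys you the conclusion without any external convergence theorem. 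Your explicit two-block treatment of the singular local case (which the paper dismisses as straightforward) is sound, except for the incidental claim that the absolutely continuous block contributes a \emph{finite} amount --- $p\ll r$ does not imply $S(p,r)<\infty$ --- but this is harmless, since for $c>0$ the $\infty$ from the quotient arrow already forces $F=\infty$, and for $c=0$ the convention $0\cdot\infty=0$ makes every term vanish.
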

\begin{proof}
  Since $F$ satisfies all the above properties on \Fins, we can apply Theorem \ref{origin} in order to establish that $F= cRE_{fin} = cRE$
  for all morphisms in the subcategory \Fins{}.  We show that $F$ extends
  uniquely to $cRE$ on all morphisms in \Sbs{}.  
  
  By convex linearity of $F$, for an arbitrary morphism $(f,s)$ from $(X,p)$ to $(Y,q)$, we have \[F((f,s)) = \int_Y F((f,s)_y) \; \dee q,\] so $F$
  is totally described by its local relative entropies.  It is thus sufficient to
  show $F = cRE$ on an arbitrary morphism $(f,s) : (X,p) \rightarrow
  (1,\delta)$.  The case where $p$ is not absolutely continuous with respect to $s$ is straightforward, so let us assume $p \ll s$.
  
We apply Lemma \ref{lem_strong} with $p$ and $s$ to get the  family of simple functions $p_n^* $ and the corresponding family of partitions $\{X_{n,k}\}$.  We define $\pi_n$ as the function that maps $x \in X_{n,k'}$ to the element $X_{n,k'} \in \{X_{n,k}\}_{k \in K_n}$.  Denote by $s_{\pi_n}$ the disintegration of $s$ along $\pi_n$ and by $s_n$ the corresponding marginal.  Note that since $p_n$ and $p$ agree on every $X_{n,k}$, $p_n$ is indeed the push-forward of $p$ along $\pi_n$.  Presented as diagrams, we have
 \[  
\begin{tikzcd}
(X,p) \arrow[r, bend right, "\pi_n", swap] & 
\left( \{X_{n,k}\}, p_n \right) \arrow[l, bend right, squiggly, "s_{\pi_n}", swap] 
\arrow[r, bend right, "f_n", swap] & (1,\delta) \arrow[l,bend right, squiggly, "s_n", swap]
\end{tikzcd}
\xRightarrow[ \operatorname{Composition}]{} 
\begin{tikzcd}
(X,p) \arrow[r, bend right, "f", swap] & 
(1,\delta) \arrow[l, bend right, squiggly, "s", swap].
\end{tikzcd}
\]

From the above diagram and the hypothesis that $F$ is a functor, we have the following inequality 
\begin{align}\label{right_side} F((f_n,s_n)) \leq F((f,s)) \text{, for all $n \in \mathbb{N}$}.\end{align}
Note that, on the one hand the disintegration of $p_n$ along $\pi_n$ at the point $X_{n,k'} \in \{X_{n,k}\} $ is given by $p_{n,\pi} : = p_n(\cdot)/p_n(X_{n,k'})$, but on the other hand, for any measurable set $E \subset X$, we also have \[ \sum_{k \in K_n} \left( \int_{X_{n,k}} \mathbbm{1}_E \; \dee p_{n,\pi} \right) s_n(X_{n,k}) = \sum_{k \in K_n} \left(\frac{p_n(E \cap X_{n,k})}{p_n(X_{n,k})}\right) s_n(X_{n,k}) \]
\[  = \sum_{k \in K_n} \left(\frac{s(E \cap X_{n,k})}{s(X_{n,k})}\right) s_n(X_{n,k}) = \sum_{k \in K_n}  s(E \cap X_{n,k}) = s(E).\] 

This means that $p_{n,\pi}$ is the disintegration of $s$ along $\pi_n$.  Presented as diagrams, where we use $f^{p_n}$ instead of $f$ to indicate that the arrow leaves from the object $(X,p_n)$ as opposed to $(X,p)$, we have
 \[  
\begin{tikzcd}
(X,p_n) \arrow[r, bend right, "\pi_n", swap] & 
\left( \{X_{n,k}\}, p_n \right) \arrow[l, bend right, squiggly, "p_{n,\pi}", swap] 
\arrow[r, bend right, "f_n", swap] & (1,\delta) \arrow[l,bend right, squiggly, "s_n", swap]
\end{tikzcd}
\xRightarrow[\operatorname{Composition}]{} 
\begin{tikzcd}
(X,p_n) \arrow[r, bend right, "f^{p_n}", swap] & 
(1,\delta) \arrow[l, bend right, squiggly, "s", swap].
\end{tikzcd}
\]

But since $F$ vanishes on $\mathbf{FP}$, we have $F\left( (\pi_n, p_{n,\pi} ) \right)  = 0$.  Combined with the fact that $F$ is a functor, we get
\begin{align}\label{equ_fp}
F\left((f^{p_n}, s)\right) =  F\left( (\pi_n, p_{n,\pi} ) \right) + F\left( (f_n, s_n)\right) = F\left( (f_n, s_n)\right).
\end{align}

By Lemma \ref{lem_strong}, we know that $p_n \rightarrow p$, in terms of our diagrams we have

\[  
\begin{tikzcd}
(X,p_n) \arrow[r, bend right, "f^{p_n}", swap] & 
(1,\delta) \arrow[l, bend right, squiggly, "s", swap]
\end{tikzcd}
\xRightarrow[\operatorname{Strong \; Convergence}]{} 
\begin{tikzcd}
(X,p) \arrow[r, bend right, "f", swap] & 
(1,\delta) \arrow[l, bend right, squiggly, "s", swap].
\end{tikzcd}
\]
Hence, combining (\ref{equ_fp}) with the lower semicontinuity of $F$, we also have the inequality 
\begin{align}\label{left_side}F((f,s)) \leq \liminf_{n \rightarrow \infty}
  F((f^{p_n},s)) = \liminf_{n \rightarrow \infty}
  F((f_n  ,s_n)) .\end{align} 
Since $(f_n,s_n)$ is in \Fins{}, we must have $F((f_n,s_n)) =
cRE((f_n,s_n))$.
Thus, combining (\ref{right_side}) and (\ref{left_side}), we get that $F((f,s))$ must satisfy
\[\limsup_{n
    \rightarrow \infty} cRE((f_n,s_n)) \leq F((f,s)) \leq \liminf_{n
    \rightarrow \infty} cRE((f_n,s_n)),\] but so does $cRE((f,s))$.  We also have \[\limsup_{n
    \rightarrow \infty} cRE((f_n,s_n)) \leq cRE((f,s)) \leq \liminf_{n
    \rightarrow \infty} cRE((f_n,s_n)).\] 
Therefore $F((f,s)) = cRE((f,s))$, as desired.  \end{proof}


\section{Conclusions and Further Directions}
As promised, we have given a categorial characterization of relative
entropy on standard Borel spaces.  This greatly broadens the scope of the original
work by Baez et al.\@~\cite{Baez11,Baez14}.  However, the main 
motivation is to study the role of entropy arguments in machine learning.
These appear in various ad-hoc ways in machine learning but with the appearance of
the recent work by Danos and his
co-workers~\cite{Danos15,Clerc17,Dahlqvist16} we feel that we have the
prospect of a mathematically well-defined framework on which to understand
Bayesian inversion and its interplay with entropy.  The most recent paper
in this series~\cite{Clerc17} adopts a point-free approach introduced
in~\cite{Chaput09,Chaput14}.  It would be interesting to extend our
definitions to a point-free situation.

\bibliographystyle{alphaurl}
\bibliography{Relative_Entropy_LMCS.bib}

\begin{thebibliography}{CDDG17}

\bibitem[Ash72]{Ash72}
R.~B. Ash.
\newblock {\em Real Analysis and Probability}.
\newblock Academic Press, 1972.

\bibitem[BF14]{Baez14}
John~C. Baez and Tobias Fritz.
\newblock A bayesian characterization of relative entropy.
\newblock {\em Theory and Applications of Categories}, 29(16):422--456, 2014.

\bibitem[BFL11]{Baez11}
John~C. Baez, Tobias Fritz, and Tom Leinster.
\newblock A characterization of entropy in terms of information loss.
\newblock {\em Entropy}, 13(11):1945--1957, 2011.

\bibitem[Bil95]{Billingsley95}
P.~Billingsley.
\newblock {\em Probability and Measure}.
\newblock Wiley-Interscience, 1995.

\bibitem[CDDG17]{Clerc17}
Florence Clerc, Vincent Danos, Fredrik Dahlqvist, and Ilias Garnier.
\newblock Pointless learning.
\newblock In {\em Proceedings of FoSSaCS 2017}, 2017.

\bibitem[CDPP09]{Chaput09}
Philippe Chaput, Vincent Danos, Prakash Panangaden, and Gordon Plotkin.
\newblock Approximating {M}arkov processes by averaging.
\newblock In {\em Proceedings of the 37th International Colloquium On Automata
  Languages And Programming (ICALP)}, volume 5556 of {\em Lecture Notes In
  Computer Science}, pages 127--138, 2009.

\bibitem[CDPP14]{Chaput14}
Philippe Chaput, Vincent Danos, Prakash Panangaden, and Gordon Plotkin.
\newblock Approximating {M}arkov processes by averaging.
\newblock {\em J. ACM}, 61(1):5:1--5:45, January 2014.
\newblock \href {https://doi.org/10.1145/2537948} {\path{doi:10.1145/2537948}}.

\bibitem[DDG16]{Dahlqvist16}
Fredrik Dahlqvist, Vincent Danos, and Ilias Garnier.
\newblock Giry and the machine.
\newblock {\em Electronic Notes in Theoretical Computer Science}, 325:85--110,
  2016.

\bibitem[DG15]{Danos15}
Vincent Danos and Ilias Garnier.
\newblock Dirichlet is natural.
\newblock {\em Electronic Notes in Theoretical Computer Science}, 319:137--164,
  2015.

\bibitem[DSDG18]{dahlqvist2018borel}
Fredrik Dahlqvist, Alexandra Silva, Vincent Danos, and Ilias Garnier.
\newblock Borel kernels and their approximation, categorically.
\newblock {\em Electronic Notes in Theoretical Computer Science}, 341:91--119,
  2018.

\bibitem[Dud89]{Dudley89}
R.~M. Dudley.
\newblock {\em Real Analysis and Probability}.
\newblock Wadsworth and Brookes/Cole, 1989.

\bibitem[Gir81]{Giry81}
M.~Giry.
\newblock A categorical approach to probability theory.
\newblock In B.~Banaschewski, editor, {\em Categorical Aspects of Topology and
  Analysis}, number 915 in Lecture Notes In Mathematics, pages 68--85.
  Springer-Verlag, 1981.

\bibitem[KL51]{Kullback51}
Solomon Kullback and Richard~A. Leibler.
\newblock On information and sufficiency.
\newblock {\em The annals of mathematical statistics}, 22(1):79--86, 1951.

\bibitem[Law64]{Lawvere64a}
F.~W. Lawvere.
\newblock The category of probabilistic mappings.
\newblock Unpublished typescript., 1964.

\bibitem[Law73]{Lawvere73}
F.~William Lawvere.
\newblock Metric spaces, generalized logic and closed categories.
\newblock {\em Rend. Sem. Mat. Fis. Milano}, 43(1):135--166, 1973.

\bibitem[Lei]{Leinster}
Tom Leinster.
\newblock An operadic introduction to entropy.
\newblock n-category cafe.

\bibitem[Pin60]{Pinsker60}
Mark~S Pinsker.
\newblock Information and information stability of random variables and
  processes.
\newblock {\em Holden-Day 1964}, 1960.

\bibitem[Pos75]{posner1975random}
Edward Posner.
\newblock Random coding strategies for minimum entropy.
\newblock {\em IEEE Transactions on Information Theory}, 21(4):388--391, 1975.

\bibitem[Rok49]{Rokhlin49}
Vladimir~Abramovich Rokhlin.
\newblock On the fundamental ideas of measure theory.
\newblock {\em Matematicheskii Sbornik}, 67(1):107--150, 1949.

\end{thebibliography}

\end{document}